\newcommand{\tr}{\operatorname{tr}}
\newcommand{\uk}{\underbar{$\kappa$}}
\newcommand{\rz}{r(0,0)}
\newcommand{\hu}{\tilde u}
\newcommand{\hv}{\tilde v}
\newcommand{\tu}{\tilde u}
\newcommand{\tv}{\tilde v}
\newcommand{\IVP}{${\mathrm{IVP}}_{\mathcal{C}^+}$}
\newtheorem{Thm}{Theorem}[section]
\newtheorem{proposition}[Thm]{Proposition}
\newtheorem{Prop}[Thm]{Proposition}
\newtheorem{lemma}[Thm]{Lemma}
\newtheorem{Cor}[Thm]{Corollary}
\newtheorem{Remark}[Thm]{Remark}
\newcounter{mnotecount}[section]
\renewcommand{\themnotecount}{\thesection.\arabic{mnotecount}}
\newcommand{\mnote}[1]
{\protect{\stepcounter{mnotecount}}$^{\mbox{~\footnotesize
$
\bullet$\themnotecount}}$ \marginpar{
\raggedright\tiny\em
$\!\!\!\!\!\!\,\bullet$\themnotecount: #1} }
\begin{document}
\title{Cosmic no-hair in spherically symmetric black hole spacetimes}
\author[1,3]{Jo\~ao L. Costa}
\author[2,3]{Jos\'e Nat\'ario}
\author[2,3]{Pedro Oliveira}
\affil[1]{Mathematics Department, Lisbon University Institute -- ISCTE}
\affil[2]{Mathematics Department, Instituto Superior T\'ecnico, ULisboa}
\affil[3]{CAMGSD, Instituto Superior T\'ecnico, ULisboa}
%

\maketitle
%
%
\begin{abstract}
We analyze in detail the geometry and dynamics of the cosmological region arising in spherically symmetric black hole solutions of the Einstein-Maxwell-scalar field system with a positive cosmological constant. More precisely, we solve, for such a system, a characteristic initial value problem with data emulating a dynamic cosmological horizon. Our assumptions are fairly weak, in that we only assume that the data approaches that of a subextremal Reissner-Nordstr\"om-de Sitter black hole, without imposing any rate of decay. We then show that the radius (of symmetry) blows up along any null ray parallel to the cosmological horizon (``near'' $i^+$), in such a way that $r=+\infty$ is, in an appropriate sense, a spacelike hypersurface.  We also prove a version of the Cosmic No-Hair Conjecture by showing that in the past of any causal curve reaching infinity both the metric and the Riemann curvature tensor asymptote to those of a de Sitter spacetime. Finally, we discuss conditions under which all the previous results can be globalized.
\end{abstract}
\tableofcontents
%
%
%
%
%
\section{Introduction}\label{sectionIntro}

Adding a positive cosmological constant to the Einstein field equations provides, arguably, the simplest way to model cosmological spacetimes undergoing accelerated expansion. Such expansion has remarkable effects in the global causal structure of solutions, as well as their asymptotic behavior. In fact, it is widely expected that, generically, for late cosmological times, the expansion will completely dominate the dynamics, damping all inhomogeneities and anisotropies, in such a way that, in the end, only the information contained in the cosmological constant will persist; given these heuristics, it is then natural to predict the asymptotic approach to a de Sitter solution, the simplest and most symmetric solution of the field equations with a positive cosmological constant. This expectation was, early on, substantiated in the celebrated Cosmic No-Hair Conjecture.

As usual in General Relativity, obtaining a precise statement for this conjecture is one its biggest challenges. Right from the start one has to deal with the fact that the Nariai solutions provides a simple counterexample for all the claims above; but since this solution is widely believed to be unstable~\cite{Beyer} (compare with~\cite{gajic}), cosmic no-hair is expected to hold generically. It turns out that the standard formulation of the conjecture does not differ much from the imprecise picture provided in the previous paragraph. It is in fact sufficiently vague so that a considerable amount of relevant work, concerning the global structure of solutions of the Einstein equations with a positive cosmological constant, immediately fits the picture; some notable examples are provided by the following (incomplete) list~\cite{Wald:1983,Friedrich:1986,Rendall:2004,Tchapnda:2003,Tchapnda:2005,Rodnianski:2009,lubbeKroon,Speck:2012,RingUniv,Oli:2016,Friedrich:2016,AH,Rad,dafRen,schlue2,CAN,Costa:2013}.

A natural, but hopeless, attempt to capture the idea of cosmic no-hair is to say that it holds provided that generic solutions contain foliations of a neighborhood of (future null) infinity ${\cal I}^{+}$ along which they approach de Sitter uniformly. Although tempting, it fails to capture some subtleties created by {\em cosmic silence}. To better understand these let us for a moment consider the linear wave equation, in a fixed de Sitter~\footnote{Or in Reissner-Nordstr\"om-de Sitter~\cite{JPZ, schlue} or Kerr-de Sitter~\cite{schlue}.} background, as a proxy to the Einstein equations. In this analogy, the previous formulation of cosmic no-hair would require (generic) solutions to decay to a constant at ${\cal I}^{+}$. However, it is well known that this is not the case~\cite{Rendall:2004,JPZ,schlue}.
To get an idea of why, note that cosmic silence is illustrated by the following construction: given any two inextendible causal curves $\gamma_i$, $i=1,2$, parameterized by proper time $t$ and reaching ${\cal I}^{+}$ as $t\rightarrow\infty$, there is ``a late enough'' ($t\gg 1$) Cauchy surface $\Sigma_t$ such that the sets ${\cal D}_{i}=J^{-}(\gamma_i)\cap J^{+}(\Sigma_t)$ are disjoint. Using the fact that the wave equation admits constant solutions, we prescribe a different constant at each ${\cal D}_i$, and then construct initial data on $\Sigma_t$ by gluing the data induced on $J^{-}(\gamma_i)\cap \Sigma_t$ by the fixed constants; finally, we solve the wave equation for such data. In the end we obtain a solution which is not constant on ${\cal I}^{+}$! This shows that the previous ``uniform approach'' to de Sitter is too strong a requirement.  A similar conclusion can be obtained for the full Einstein equations by considering the freedom we have to prescribe data at ${\cal I}^{+}$ for the Friedrich conformal field equations~\cite{Friedrich:1986,Friedrich:2016,kroonBook}.

A precise formulation of cosmic no-hair, properly taking into account the previous issues, was obtained only recently by Andr\'{e}asson and Ringstr\"om~\cite{AH}. The fundamental new insight is to relax the previous requirement and instead say that cosmic no-hair holds if for each  future inextendible causal curve $\gamma$ and ${\cal D}_t=J^{-}(\gamma)\cap J^{+}(\Sigma_t)$, defined  as before, $({\cal D}_t,g)$ approaches (in a precise sense~\footnote{See point 4 of Theorem~\ref{thmMain} for the formulation used here.}) de Sitter as $t\rightarrow \infty$. In other words, every such observer will see the spacetime structure around him approaching that of a de Sitter spacetime, although, in general, this will not happen in a uniform way for all observers. Here we will follow the spirit, although not the letter, of
the Andr\'{e}asson-Ringstr\"om formulation. A distinct feature, for instance, will come from the fact that we will be considering black hole spacetimes, in which not all future inextendible causal curves reach infinity. Clearly, the corresponding observers will not see the geometry approaching that of de Sitter, and so the Cosmic No-Hair Conjecture will have to be suitably modified to exclude them.


It was also shown in~\cite{AH} that this form of cosmic no-hair does hold generically for the Einstein-Vlasov system under suitable toroidal-symmetric assumptions (see also~\cite{Rad} for results concerning scalar fields in similar symmetric settings). We expect that this form of cosmic no-hair should be valid for most, if not all, of the models considered in the references given above.

In the context of cosmic no-hair, either by symmetry assumptions or smallness conditions, all the existing (non-linear) results exclude the existence of black holes a priori. The main goal of this paper is to help bridge this gap. Nonetheless, there exist some notable partial exceptions:~\cite{dafRen} deals with the Einstein-Vlasov system under various symmetry assumptions and provides a very general, mostly qualitative, description of the global structure of the corresponding solutions. In particular, for spherically symmetric solutions, it is shown there that the Penrose diagram of the {\em cosmological region}~\footnote{See below the clarification of this terminology.}, if non-empty, is bounded to the future by an acausal curve where the radius of symmetry is infinite (assuming an appropriate non-extremality condition); this is in line with the expectations of cosmic no-hair but does not provide enough quantitative information to show that the geometry is approaching that of de Sitter. We also mention the recent work by Schlue\cite{schlue2}, where he takes a remarkable step towards the proof of the (non-linear) stability of the cosmological region of Schwarzschild-de Sitter for the (vacuum) Einstein equations, without any symmetry assumptions.

The aim of this paper is to provide the first (to the best of our knowledge) complete realization of cosmic no-hair, in the spirit of the Andr\'{e}asson-Ringstr\"om formulation, in the context of subextremal black hole spacetimes. With this aim in mind we take as reference solutions the subextremal elements of the Reissner-Nordstr\"om-de Sitter (RNdS) family~\footnote{To set the convention, whenever we refer to an RNdS solution $({\cal M}, g)$ we mean the maximal domain of dependence $D(\Sigma) = {\cal M}$ of a complete Cauchy hypersurface $\Sigma= \mathbb{R}\times \mathbb{S}^2$.} (see Section~\ref{sectionRNdS} for more details). These  model static, spherically symmetric and electromagnetically charged black holes in an expanding universe. They are solutions of the Einstein-Maxwell equations parameterized by mass $M$, charge $e$ and cosmological constant $\Lambda$. The global causal structure of these black hole spacetimes is completely captured by the Penrose diagram in Figure~\ref{figPenrose} and is considerably different from the one of their asymptotically flat ($\Lambda=0$) counterparts. The first noticeable difference corresponds to the existence of a periodic string of causally disconnected isometric regions. Among these regions, the only ones with no analogue in the $\Lambda=0$ case are the cosmological regions~\footnote{It is also standard to refer to these as expanding regions, but we prefer the designation ``cosmological'', since the local regions (regions I and III in Figure~\ref{figPenrose}) are also expanding~\cite{brillRNdS}.} (regions V and VI in Figure~\ref{figPenrose}), which, in the topology of the plane, are bounded by cosmological horizons $\cal C^{\pm}$, future and past null infinity ${\cal I}^{\pm}$, and a collection of points $i^{\pm}$. Since all of its connected components are isometric, we can focus on only one of them; for convenience, we will consider a future component, i.e., one whose boundary contains a component of ${\cal I}^{+}$ (region V in Figure~\ref{figPenrose}).
{\begin{figure}[ht]
\centering
\includegraphics[width=1.1\textwidth]{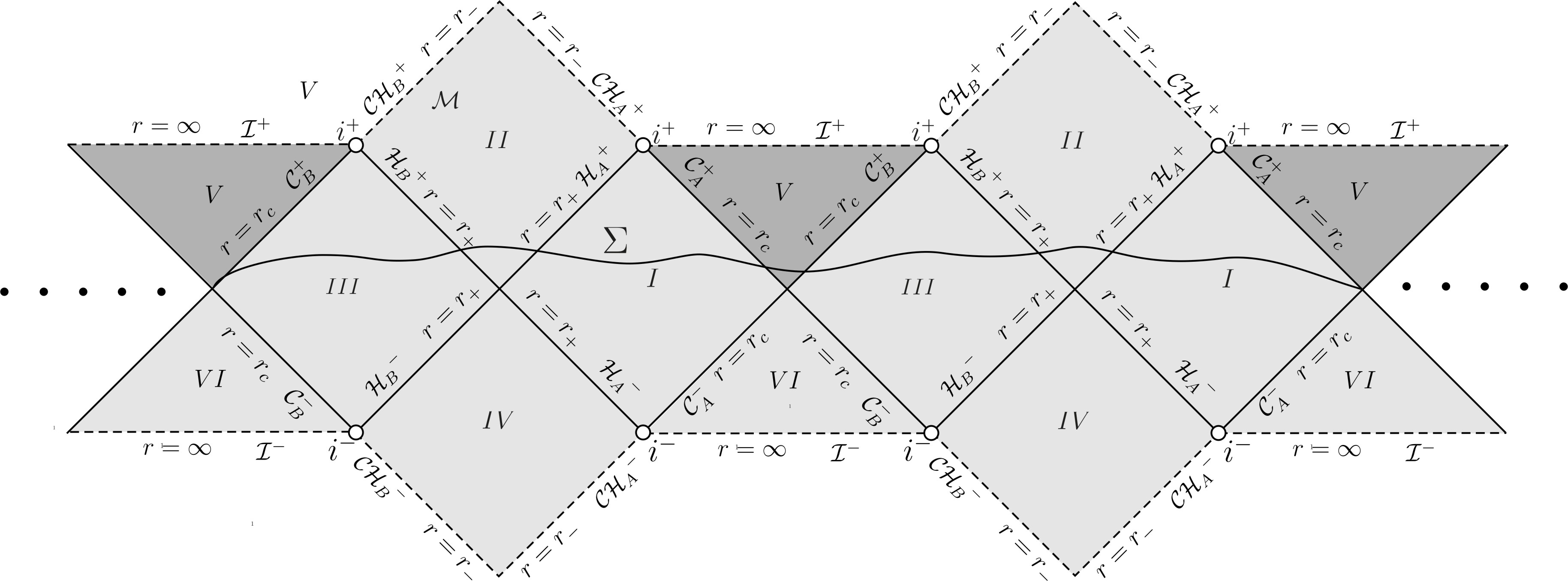}
\caption[Text der im Bilderverzeichnis auftaucht]{Conformal diagram of Reissner-Nordstr\"om-de Sitter spacetime.}
\label{figPenrose}\end{figure}}
%

We will start our evolution from a dynamical cosmological horizon, whose properties generalize those of the future Killing cosmological horizon ${\cal C}_{A}^{+}$ of our reference RNdS solutions; by imposing appropriate data on a transverse null surface to the cosmological horizon we then obtain a well-posed characteristic initial value problem.
The remarkable results by Hintz and Vasy~\cite{hintzVasy, Hintz}, concerning the non-linear stability of the local region (regions I and III in Figure~\ref{figPenrose}) of de Sitter black holes, suggest\footnote{These stability results concern the Einstein vacuum and Einstein–Maxwell equations, and not the Einstein–Maxwell–scalar field system.} that our defining properties for a dynamical horizon can, in principle, be recovered from (spherically symmetric) Cauchy initial data which is close, in an appropriate sense, to RNdS data. In fact, we expect our assumptions to hold for larger classes of Cauchy data, especially since although we will require our characteristic data to approach the data of a reference RNdS black hole, we will not need to impose any specific rates of decay.

In this paper, we will show that the future evolution of the characteristic initial data above will lead to a solution with the following properties: the radius (of symmetry) blows up along any null ray parallel to the cosmological horizon (``near'' $i^+$). Moreover, $r=+\infty$ is, in an appropriate sense, a (differentiable) spacelike hypersurface.  Finally we will also obtain a version of cosmic no-hair by showing that in the past of any causal curve reaching $r=\infty$, both the metric and the Riemann curvature tensor asymptote those of a de Sitter spacetime, at a specific rate. We will also briefly discuss conditions under which all the previous results can be globalized to a ``complete'' cosmological region.

\subsection{Main results}

Our main results can be summarized in the following:

\begin{Thm}
\label{thmMain}
Fix, as a reference solution, a subextremal member of the RNdS family, with parameters $(M,e,\Lambda)$ and cosmological radius $r_c$.

Let $({\cal M},g,F,\phi)$ be the maximal globally hyperbolic development of smooth spherically symmetric initial data given on $\Sigma= \mathbb{R}\times \mathbb{S}^2$, with charge parameter $e$ and cosmological constant $\Lambda$. Let $\tilde{\cal Q}$ be the projection of ${\cal M}$ to the (spherical symmetry) orbit space and assume that $\tilde {\cal Q}$ contains a future complete null line ${\cal C}^+$. Let
$(u,v):\tilde {\cal Q}\rightarrow\mathbb{R}^2$ be such that $u$ is an affine parameter on ${\cal C}^+=\{v=0\}$, $\partial_u$ and $\partial_v$ are both future oriented and, in an open set of $\tilde {\cal Q}$ containing ${\cal C}^+$,
$$g = - \Omega^2 (u,v)  du  dv + r^2 (u,v)  \mathring{g}\;,$$
where $\mathring{g}$ is the metric of the round 2-sphere.

Assume also that
\begin{enumerate}[(i)]
\item the radius of symmetry satisfies $r(u,0)\rightarrow r_c$, as $u\rightarrow \infty$\;,
\item the (renormalized) Hawking mass satisfies $\varpi(u,0)\rightarrow M$, as $u\rightarrow \infty$\;,
\item $\partial_vr(u,0)> 0$, for all $u\geq 0$\;, and
\item $|\partial_u\phi(u,0)|\leq C$, for some $C>0$ and all $u\geq 0$\;.
\end{enumerate}

Then, there exist $U,V>0$ such that, in ${\cal Q}=\tilde {\cal Q}\cap [U,\infty)\times[0,V)$, we have:
\begin{enumerate}
\item \underline{Blow up of the radius function:} for all $v_1\in(0,V)$, there exists $u^*(v_1)<\infty$ such that $r(u,v_1)\rightarrow \infty$, as $u\rightarrow u^*(v_1)$. Moreover $r(u,v)\rightarrow \infty$, as $(u,v)\rightarrow (u^*(v_1),v_1)$.
\item \underline{Asymptotic behavior of the scalar field:} there exists $C>0$ such that
$$
\left|\frac{\partial_v\phi}{\partial_vr}\right|+ \left|\frac{\partial_u\phi}{\partial_ur}\right|\leq C r^{-2}\;.
$$
\item \underline{$r=\infty$ is spacelike:} for a parameterization of the curves of constant $r$ of the form $v\mapsto (u_r(v),v)$ there exists a constant $C_1>0$ and a function $(0,V]\ni v\mapsto C_2(v)\in\mathbb{R}^+$, which may blow up as $v\rightarrow 0$, such that
$$-C_2(v)<u'_r(v)<-C_1\;,$$
holds for all $r>r_c$ and all $v>0$.
\item \underline{Cosmic no-hair:} Let $\gamma$ be a causal curve along which $r$ is unbounded. Let $i_{dS}$ be a point in the future null infinity of the de Sitter spacetime with cosmological constant $\Lambda$. Let $\{e_{I}\}_{I=0,1,2,3}$ be an orthonormal frame in de Sitter defined on $J^{-}(i_{dS})\cap \{r_{dS}>r_1\}$, for some $r_1>0$; in particular, in this frame, the de Sitter metric reads $^{dS}\!g_{IJ}=\eta_{IJ}$. Then, by increasing $r_1$ if necessary, there exists a diffeomorphism mapping $J^{-}(\gamma)\cap \{r>r_1\}$ to a neighborhood of $i_{dS}$ in $J^{-}(i_{dS})$ such that, in the dS frame $\{e_I\}$, we have, for $r_2 \geq r_1$,
\begin{equation}
\sup_{J^-(\gamma)\cap\{r\geq r_2\}}|g_{IJ}-^{dS}\!\!g_{IJ}|\lesssim r_2^{-2}\;,
\end{equation}
and
\begin{equation}
\sup_{J^-(\gamma)\cap\{r\geq r_2\}}|R^{I}_{JKL}-^{dS}\!\!R^{I}_{JKL}|\lesssim r_2^{-2}\;.
\end{equation}
\end{enumerate}
\end{Thm}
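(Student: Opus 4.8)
The plan is to work in the spherically symmetric reduced system (wave equations for $r$, $\phi$, the Hawking mass, and constraint equations in $u,v$), exploiting the monotonicity inherited from the hypotheses on $\mathcal{C}^+$. First I would set up the characteristic IVP near $\mathcal{C}^+=\{v=0\}$ and propagate the signs: using the Raychaudhuri equations one shows that the monotonicity $\partial_u r\ge 0$, $\partial_v r>0$ persists into a neighborhood $[U,\infty)\times[0,V)$, and that $r$ stays bounded below by (something close to) $r_c$ there. The key structural fact is that in the cosmological region the relevant ``mass-aspect'' combination forces $\partial_u r$ to grow: morally $\partial_u\partial_v \log(\text{something}) $ has a sign coming from the $-\Lambda r^2/3$ term dominating once $r$ is large, which is exactly the de Sitter mechanism. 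From a differential inequality of the schematic form $\partial_u r \gtrsim r^3$ along $v=v_1$ (after controlling the error terms by the remaining field equations and the bound on $\partial_u\phi$), one gets blow-up of $r$ in finite affine parameter $u^*(v_1)<\infty$, which is statement 1; the joint limit as $(u,v)\to(u^*(v_1),v_1)$ follows from continuity/monotonicity in $v$.

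Next, for statement 2 (decay of the scalar field), I would treat $\partial_v\phi$ and $\partial_u\phi$ via the equation $\partial_u\partial_v(r\phi) = (\text{curvature/lapse terms})\, r\phi$ — more precisely, the standard identity that $\partial_u(r\partial_v\phi)$ and $\partial_v(r\partial_u\phi)$ are controlled — and integrate from $\mathcal{C}^+$. The point is that once $r\to\infty$ like the de Sitter radius, the ``potential'' decays fast enough that $r^2\partial_v\phi/\partial_v r$ and $r^2\partial_u\phi/\partial_u r$ stay bounded; this is a Grönwall-type estimate where the gain of $r^{-2}$ comes from the volume growth $r^2$ of the spheres against the $1/r^2$-type source. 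I would prove this in tandem with bootstrap bounds on $\varpi$, $\Omega^2$, and $\partial_v r/\Omega^2$ that identify the asymptotic geometry.

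For statement 3 ($r=\infty$ spacelike), I would implicitly differentiate $r(u_r(v),v)=\text{const}$ to get $u_r'(v)=-\partial_v r/\partial_u r$, so the claim reduces to a two-sided bound on $\partial_v r/\partial_u r$ away from $0$ and (locally in $v$) away from $\infty$. The lower bound $u_r'<-C_1$ comes from the blow-up rate $\partial_u r\gtrsim r^3$ together with an upper bound $\partial_v r\lesssim r^3$ (same mechanism, other direction), while the $v$-dependent lower bound $-C_2(v)$ just needs $\partial_v r$ to stay positive and $\partial_u r$ bounded below on each slice $v=v_1$; smoothness of the hypersurface follows from smoothness of the solution plus these nonvanishing derivatives via the implicit function theorem. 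Finally, statement 4 (cosmic no-hair) is where I expect the main obstacle: one must construct an \emph{explicit} diffeomorphism from $J^-(\gamma)\cap\{r>r_1\}$ to the corresponding region of de Sitter and control the metric and full Riemann tensor in a \emph{de Sitter} orthonormal frame uniformly. The strategy is to use $(r,v)$ (or $(r,$ an affine parameter on the outgoing direction$)$) as coordinates, write $g$ in the form $-\,\text{(lapse)}^2 \,(\cdots) + r^2\mathring g$, and show that all metric components, after the natural rescaling, converge to their de Sitter values with errors $O(r^{-2})$ — which requires the sharp asymptotics $\varpi\to M$ being subleading to $\Lambda r^2/3$, $\Omega^2$ matching the de Sitter lapse, and the scalar-field energy (hence the stress tensor contributions to curvature) decaying like $r^{-4}$ from statement 2. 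The curvature estimate is the delicate part: one cannot simply differentiate the $r^{-2}$ metric bound (that would lose derivatives), so one instead computes the reduced curvature components (Gauss curvature of the quotient, the $K$-type terms, and the sphere curvature $1/r^2$) directly from the field equations, uses the already-established decay of $\varpi-M$ and of the scalar field gradients, and checks that each component of $R^I{}_{JKL}-{}^{dS}\!R^I{}_{JKL}$ is a combination of these controlled quantities times $r^{-2}$. Taking the supremum over $J^-(\gamma)\cap\{r\ge r_2\}$ and noting that on this set $r\ge r_2$ gives the stated rate; the fact that $\gamma$ is merely causal (not necessarily reaching $\mathcal{I}^+$ transversally) is handled because $J^-(\gamma)$ is, by statements 1 and 3, contained in a region where all these estimates already hold.
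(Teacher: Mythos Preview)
Your outline is broadly on the right track and parallels the paper's structure, but there are two places where your mechanism is off and would not close as stated.

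\textbf{Blow-up rate.} You write $\partial_u r \gtrsim r^3$ (and similarly $\partial_v r\lesssim r^3$). The actual rate is $r^2$: from $\partial_v\nu = \underbar{$\kappa$}\,\lambda\,\partial_r(1-\mu)$ with $\underbar{$\kappa$}\le -1$ and $-\partial_r(1-\mu)\gtrsim \Lambda r$, one integrates in $v$ to get $\nu \gtrsim r^2$ (the de Sitter expansion law is $\dot r \sim \Lambda r^2/3$ in affine time). This is harmless for the finite-$u$ blow-up and for the ratio in statement~3, but your stated power is wrong.

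\textbf{Scalar field decay.} This is the real gap. Integrating $\partial_u(r\partial_v\phi)=-\lambda\,\zeta/r$ from $\mathcal{C}^+$ and invoking a generic Gr\"onwall will not produce the $r^{-2}$ gain; the ``potential'' in that formulation does \emph{not} decay (indeed $\lambda \sim r^2$ grows). The paper's mechanism is different: one works with the renormalized quantities $\theta/\lambda$ and $\zeta/\nu$, whose transport equations carry a \emph{damping} term $\underbar{$\kappa$}\,\partial_r(1-\mu)$. Rewriting along $r$-integration this damping is exactly $\partial_r(1-\mu)/(1-\mu)$, and the key computation is that this ratio is $\ge 2/r$ for large $r$, producing an integrating factor $(r_a/r_b)^2$. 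Because the source in the $\theta/\lambda$ equation is $\zeta/\nu$ and vice versa, one must close a coupled bootstrap: define $f(r)=\sup_v\max\{|\theta/\lambda|,|\zeta/\nu|\}$ and show $f(r)\le C\Rightarrow f(r)\le \tfrac{C}{2}(1+r_2^2/r^2)$, whence $f$ is non-increasing and in fact $f(r)\le f(r_2)r_2/r$. Your ``$1/r^2$-type source'' heuristic misidentifies where the decay comes from---it is a cosmological redshift effect in the damping coefficient, not a small potential.

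\textbf{Cosmic no-hair coordinates.} For statement~4 your plan to use $(r,v)$ coordinates is workable in principle, but the paper instead introduces new double-null coordinates $(\tilde u,\tilde v)$ normalized so that $\tilde{\underbar{$\kappa$}}\equiv -1$ on $\tilde v=v_\infty$ and $\tilde\kappa\equiv -1$ on $\tilde u=u_\infty$ (the endpoint of $\gamma$), and compares directly with de Sitter in the same gauge. This makes the $r^{-2}$ closeness of $\tilde\Omega^2$ and $r$ to their de Sitter counterparts transparent via an ODE comparison for $r(\tilde u,\tilde v)$, and the curvature comparison then proceeds exactly as you say---computing the Gauss curvature, the Hessian of $r$, and the sphere curvature directly from the field equations rather than differentiating the metric estimate. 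One correction: the scalar-field stress-energy in the orthonormal frame decays only like $r^{-2}$ (since $|e_a\phi|\lesssim r^{-1}$), not $r^{-4}$; the $r^{-4}$ rate is for the Maxwell part.
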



The proof of this theorem will be given in Section~\ref{sectionProofs}. Condition \emph{(iii)} is discussed further in Remark~\ref{confusao}.

We will now provide extra conditions that allow the previous results to be globalized to a
``complete'' cosmological region:

\begin{Thm}
\label{thmMainGlobal}
Fix, as reference solutions, two subextremal members of the RNdS family, with parameters $(M_i,e,\Lambda)$,  and cosmological radius $r_{c,i}$, $i=1,2$.
Let $({\cal M},g,F,\phi)$ and $(u,v)$ be as in Theorem~\ref{thmMain}, but now assume that $u=0$ is also complete and that
\begin{enumerate}[(i)']
\item  $r(x,0)\rightarrow r_{c,1}$ and $r(0,x)\rightarrow r_{c,2}$, as $x\rightarrow \infty$\;,
\item $\varpi(x,0)\rightarrow M_1$ and $\varpi(0,x)\rightarrow M_2$ as $x\rightarrow \infty$\;,
\item $\partial_vr(x,0)>0$ and $\partial_ur(0,x) > 0$, for all $x\geq 0$\;, and
\item $|\partial_u\phi(x,0)|+|\partial_v\phi(0,x)|\leq C$, for some $C>0$ and all $x\geq 0$\;.
\end{enumerate}
If, moreover, either
\begin{enumerate}[I.]
\item there exists a sufficiently small $\epsilon>0$, such that  for all $x\geq 0$
$$|r(x,0)-r_{c,1}|+|r(0,x)-r_{c,2}|+|\varpi(x,0)-M_1|+|\varpi(0,x)-M_2|\leq \epsilon\;,$$
or
\item we assume a priori that the future boundary of $\tilde {\cal Q}\cap [0,\infty)^2$, in $\mathbb{R}^2$, is of the form
$${\cal B}=\{(\infty,0)\}\cup{\cal N}_1\cup{\cal B}_{\infty}\cup {\cal N}_2\cup \{(0,\infty)\}\;,$$
where, for some $u_{\infty},v_{\infty}\geq 0$,  ${\cal N}_1=\{u=\infty\}\times (0,v_{\infty}]$ and ${\cal N}_2=(0,u_{\infty}]\times\{v=\infty\}$ are (possibly empty) null segments, and ${\cal B}_{\infty}$ is an acausal curve,  connecting $(\infty,v_{\infty})$ to $(u_{\infty},\infty)$, along which the radius $r$ extends to infinity by continuity,
\end{enumerate}
then the conclusions 1--4 of Theorem~\ref{thmMain} hold in ${\cal Q}=\tilde {\cal Q}\cap [0,\infty)^2$; in particular this implies that under assumption II the segments ${\cal N}_i$ are empty.
\end{Thm}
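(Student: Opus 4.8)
The strategy is to reduce the global statement to two applications of the local Theorem~\ref{thmMain} — one on the "$v=0$ side" using the affine parameter $u$, and one on the "$u=0$ side" using the affine parameter $v$ — and then to propagate the resulting conclusions into the interior bulk ${\cal Q} = \tilde{\cal Q}\cap[0,\infty)^2$. First I would observe that hypotheses $(i)'$--$(v)'$ are precisely the symmetric versions of $(i)$--$(v)$ along each of the two characteristic axes, with the strict monotonicity in $(iii)'$ (rather than the mere non-negativity of $(iii)$) guaranteeing that the whole initial segment $u=0$ (respectively $v=0$) already lies in the "cosmological" part of the orbit space. So Theorem~\ref{thmMain} applies near $i^+$ on each side and gives: blow-up of $r$ along rays parallel to each horizon, the $r^{-2}$ decay of $\partial_v\phi/\partial_v r$ and $\partial_u\phi/\partial_u r$, the spacelike character of $\{r=\infty\}$, and cosmic no-hair — but only in the two "corner" regions $[U,\infty)\times[0,V)$ and $[0,V)\times[U,\infty)$. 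The remaining work is to show these conclusions persist throughout ${\cal Q}$, i.e. to remove the largeness restrictions $u\geq U$, $v\geq V$.

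The mechanism for propagation should be the sign and monotonicity structure of the Einstein-Maxwell-scalar field system in the cosmological region: in the relevant gauge one expects $\partial_u r>0$ and $\partial_v r>0$ to be preserved by the evolution once they hold on the initial segments (this is the standard "trapped/anti-trapped" dichotomy — in the cosmological region both null derivatives of $r$ are positive), and the renormalized Hawking mass $\varpi$ is monotone along characteristics with a sign dictated by the field equations. Concretely, I would (a) use the characteristic equations for $\partial_u\partial_v r$, $\partial_u\partial_v(r\phi)$, and the evolution of $\varpi$ to show that the region where $r$ is large and the relevant monotonicities hold is "trapping" — once a characteristic enters it, it cannot leave — so the good estimates spread from the two corners across all of ${\cal Q}$; (b) deduce that every constant-$r$ curve for $r>r_c$ (or $>r_{c,i}$) is globally a spacelike graph with the two-sided bound on $u_r'(v)$ of conclusion~3; and (c) conclude that $r$ blows up along every null ray and along the future boundary, and that the no-hair comparison estimates of conclusion~4 hold with $J^-(\gamma)$ now a subset of the full ${\cal Q}$. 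Here the two alternative hypotheses enter: under I, the $\epsilon$-smallness of the deviations of $r$ and $\varpi$ from the reference RNdS values, uniformly along both axes, lets one run a continuity/bootstrap argument controlling the solution in a full neighborhood of both horizons simultaneously and then patch with the corner estimates; under II, one instead takes the structure of the future boundary ${\cal B}$ as given, so that the null segments ${\cal N}_i$ and the acausal curve ${\cal B}_\infty$ are present a priori, and one only needs to show the interior estimates and that ${\cal N}_i$ must in fact be empty — which follows because conclusion~1 (blow-up of $r$ along every $v=\text{const}$ ray, reaching a \emph{finite} $u^*$) is incompatible with a nonempty null segment $\{u=\infty\}\times(0,v_\infty]$ on the boundary.

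The main obstacle I anticipate is step (a): showing that the good region is genuinely trapping for the coupled system \emph{without} any assumed decay rate on the characteristic data. The corner estimates from Theorem~\ref{thmMain} come with constants that may degenerate as one approaches the horizons ($v\to 0$ or $u\to 0$) — indeed conclusion~3 already allows $C_2(v)\to\infty$ as $v\to 0$ — so one must be careful that propagating inward does not require uniform-in-$v$ (or uniform-in-$u$) control that one does not have. Under hypothesis I this is handled by the explicit $\epsilon$-smallness, which supplies the missing uniformity directly on the data and lets it be bootstrapped; the delicate point is choosing $\epsilon$ small enough (in terms of the RNdS parameters and the subextremality gap) that the bootstrap closes before $r$ becomes large, after which the corner argument takes over. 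Under hypothesis II the obstacle is sidestepped by fiat — the global causal structure is postulated — so the real content there is just the consistency check that eliminates ${\cal N}_i$ and then the routine globalization of conclusions 2--4 using the already-established monotonicities on the now-known domain ${\cal Q}$.
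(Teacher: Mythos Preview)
Your overall architecture---two symmetric applications of Theorem~\ref{thmMain} followed by a globalization step---matches the paper's, but you miss the key simplification that makes case~I essentially trivial, and your globalization mechanism for case~II differs from the paper's.

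For case~I: the $\epsilon$-smallness hypothesis is \emph{exactly} the condition~\eqref{smallMass} that, in the setup of \IVP, is obtained only after shifting $u$ to a large value $U$. With~I assumed, no shift is needed: the full \IVP\ hypotheses hold on $[0,\infty)\times\{0\}\cup\{0\}\times[0,V]$ for \emph{every} $V>0$ (the transverse segment $\{0\}\times[0,V]$ carries no smallness requirement beyond the sign conditions, which hold by $(iii)'$--$(iv)'$). Hence the entire machinery of Sections~3--7 applies directly to the strip $\tilde{\cal Q}\cap[0,\infty)\times[0,V)$ for arbitrary $V$, and dually to $\tilde{\cal Q}\cap[0,U)\times[0,\infty)$ for arbitrary $U$. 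Since every point of $\tilde{\cal Q}\cap[0,\infty)^2$ lies in some such strip, you are done. There is no bootstrap, no propagation from corners, and no ``main obstacle'': the obstacle you anticipate is one you have created by not noticing that under~I the corner regions are already full strips. The paper's proof of case~I is two sentences.

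For case~II: you correctly see that the content is eliminating ${\cal N}_i$ and then globalizing. The paper invokes conclusion~3 (spacelike character of $r=\infty$) rather than your conclusion~1, but either works. The real difference is in the globalization: the paper does not argue via a ``trapping'' region for characteristics. Instead, once ${\cal N}_i=\varnothing$ the future boundary is the acausal curve ${\cal B}_\infty$ with $r\to\infty$, and Raychaudhuri gives $\lambda,\nu>0$ throughout ${\cal Q}=D^-({\cal B}_\infty)$. The crucial point is then that each level curve $\{r=r_1\}$ decomposes into a piece near $(\infty,0)$, a piece near $(0,\infty)$---both controlled by the two corner applications of Theorem~\ref{thmMain}---and a \emph{compact} connecting arc, on which all quantities are bounded by continuity. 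This compactness argument is what makes the function $f$ of Section~\ref{sectionScalarAss} well-defined and lets the scalar-field decay (hence conclusions~2--4) extend globally. Your ``trapping'' heuristic is plausible but vaguer, and does not isolate this compactness-of-the-middle mechanism, which is what actually closes the argument.
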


The proof of this theorem will be presented in Section~\ref{sectionProofs}.

\begin{Remark}
The a priori assumptions concerning the global structure of the future boundary of the Penrose diagram, formulated in assumption {\em II} above,  were inspired by the results in~\cite{dafRen}[Theorem 1.3] concerning the Einstein-Vlasov system.
Presumably, some of those results will remain valid for the matter model considered here, since they are mostly based on general properties, namely energy conditions and extension criteria, that should hold for both systems. Here we will not pursue this relation any further.
\end{Remark}

\section{Setting}\label{sectionSeetings}
\subsection{The Einstein-Maxwell-scalar field system in spherical symmetry}

We will consider the Einstein-Maxwell-scalar field (EMS) system in the presence of a positive cosmological constant $\Lambda$:
\begin{align}
& R_{\mu\nu} - \frac{1}{2} R\, g_{\mu\nu} + \Lambda g_{\mu\nu} = 2 T_{\mu\nu} \label{einstein} \, , \\
& T_{\mu\nu} = \partial_\mu \phi\, \partial_\nu \phi - \frac{1}{2} \partial_\alpha \phi\, \partial^\alpha \phi \, g_{\mu\nu} + F_{\mu\alpha} {F^\alpha}_\nu - \frac{1}{4} F_{\alpha\beta} F^{\alpha\beta} g_{\mu\nu} \,,
\label{energymomentumtensor} \\
& dF = d\star F = 0 \label{maxwell} \, , \\
& \square_g \phi = 0 \label{scalarfield} \, .
\end{align}
In this system a Lorentzian metric $g_{\mu\nu}$  is coupled to a Maxwell field $F_{\mu\nu}$ and a scalar field $\phi$ via the Einstein field equations~\eqref{einstein}, with energy-momentum tensor~\eqref{energymomentumtensor}. To close the system we also impose the  source-free Maxwell equations~\eqref{maxwell}. Then, the scalar field satisfies, as a consequence of the previous equations, the wave equation~\eqref{scalarfield}. Note that $\phi$ and $F_{\mu\nu}$ only interact through the gravitational field equations; consequently, the existence of a non-vanishing electromagnetic field requires a non-trivial topology for our spacetime manifold $\cal M$.

We will work in spherical symmetry by assuming that ${\cal M}={\cal Q}\times \mathbb{S}^2$ and by requiring the existence of double-null coordinates $(u,v)$, on $\cal Q$, such that the spacetime metric takes the form
\begin{equation}
g = - \Omega^2 (u,v)  du  dv + r^2 (u,v)  \mathring{g}\;,
\end{equation}
with $\mathring{g}$ being the metric of the unit round sphere.

Under these conditions, Maxwell's equations decouple from the EMS system. More precisely, there exists a constant $e\in\mathbb{R}$, to which we will refer as the charge (parameter), such that
\begin{equation}
\label{F}
F = - \frac{e \, \Omega^2(u,v)}{2 \, r^2(u,v)} \, du \wedge dv\;.
\end{equation}
\noindent
The remaining equations of the EMS system then become
\begin{align}
& \partial_u\partial_vr = -\frac{\Omega^2}{4r} - \frac{\partial_ur\,\partial_vr}{r} + \frac{\Omega^2e^2}{4r^3} + \frac{\Omega^2 \Lambda r}{4} \label{wave_r} \, , \\
& \partial_u\partial_v\phi = -\,\frac{\partial_ur\,\partial_v\phi+\partial_vr\,\partial_u\phi}{r} \label{wave_phi} \, , \\
& \partial_v\partial_u\ln\Omega = -\partial_u\phi\,\partial_v\phi-\,\frac{\Omega^2e^2}{2r^4}+\frac{\Omega^2}{4r^2}+\frac{\partial_ur\,\partial_vr}{r^2} \label{wave_Omega} \, , \\
& \partial_u \left(\Omega^{-2}\partial_u r\right) = -r\Omega^{-2}\left(\partial_u\phi\right)^2 \label{raychaudhuri_u} \, , \\
& \partial_v \left(\Omega^{-2}\partial_v r\right) = -r\Omega^{-2}\left(\partial_v\phi\right)^2\label{raychaudhuri_v} \, .
\end{align}

A well known consequence of the Raychaudhuri equations~\eqref{raychaudhuri_u} and \eqref{raychaudhuri_v} is the following version of Hawking's area law:

\begin{Prop} \label{area_Hawking}
If a radial null geodesic $\gamma$ is future complete and $r\circ\gamma$ does not vanish then $r\circ\gamma$ must be nondecreasing towards the future.
\end{Prop}

\begin{proof}
Since $r\circ\gamma$ does not vanish, we can assume that $\gamma$ is given by $v=0$. By rescaling the coordinate $u$, we can choose $\Omega^2(u,0)=1$, so that $u$ is an affine parameter (compare with~\cite{CGNS2}[Section 8]). In this case, the Raychaudhuri equation~\eqref{raychaudhuri_u} implies $\partial^2_u r \leq 0$, and the result then follows from the fact that $r\circ\gamma$ does not vanish and $u$ is unbounded by the future completeness of $\gamma$.
\end{proof}

It turns out to be convenient, both by its physical relevance and by its good monotonicity properties, to introduce the (renormalized) Hawking mass $\varpi=\varpi(u,v)$, defined by
\begin{equation}
\label{1-mu}
1-\mu:= 1-\frac{2\varpi}{r}+\frac{e^2}{r^2}-\frac{\Lambda}{3}r^2=\partial^{\alpha}r\partial_{\alpha} r\;.
\end{equation}
Later we will replace the wave equation for the conformal factor $\Omega$~\eqref{wave_Omega} by equations~\eqref{EvMass_u} and~\eqref{EvMass_v} prescribing the gradient of $\varpi$.

In this paper we will use a first order formulation of the EMS system, obtained by introducing the quantities:
\begin{align}
& \nu \, := \, \partial_u r \, ,\\
& \lambda \, := \, \partial_v r \, ,\\
& \theta \, := \, r \partial_v \phi \, ,\\
& \zeta \, := \, r \partial_u \phi \, .
\end{align}
Note, for instance, that in terms of these new quantities~\eqref{1-mu} gives
\begin{equation}
\label{1-mu2}
1-\mu=-4\Omega^{-2} \lambda\nu\,.
\end{equation}
Later we will be interested in a region where $\lambda>0$; there, we are allowed to define
\begin{equation}
\label{bark}
\underbar{$\kappa$}:=-\frac{1}{4}\Omega^{2}\lambda^{-1}\;.
\end{equation}
%
%


Our first order system is then\footnote{Note that $(1-\mu)$ depends on $(u,v)$ only through $(r,\varpi)$. In what follows, in a slight abuse of notation, we will interchangeably regard $(1-\mu)$ as a function of either pair of variables, with the meaning being clear from the context.}
\begin{align}
& \partial_u \lambda \, = \, \partial_v \nu \, = \, \underbar{$\kappa$} \lambda \partial_r \left(1-\mu\right) \label{EvNuLambda} \, , \\ 
& \partial_u \varpi \, = \, \frac{\zeta^2}{2\underbar{$\kappa$}} \label{EvMass_u} \, , \\ 
& \partial_v \varpi \, = \, \frac{\theta^2}{2\lambda} \left(1-\mu\right) \label{EvMass_v} \, , \\ 
& \partial_u \theta \, = \, - \frac{\zeta \lambda}{r} \, , \label{wavetheta} \\
& \partial_v \zeta \, = \, - \frac{\theta \nu}{r} \, , \label{wavezeta} \\
& \partial_v \underbar{$\kappa$} \, = \, \frac{\underbar{$\kappa$} \theta^2}{r\lambda} \label{EvKbar} \, , \\
& \nu \, = \, \underbar{$\kappa$} \left(1-\mu\right) \, , \label{Kbar}
\end{align}
which is an overdetermined system of PDEs together with the algebraic equation~\eqref{Kbar} that replaces~\eqref{bark}. The results in~\cite{CGNS1}[Section~6] show that, under appropriate regularity conditions, this system is equivalent to the spherically symmetric EMS system, in a spacetime region where $\partial_vr=\lambda>0$.


\subsection{Reissner-Norstr\"om-de Sitter revisited}
\label{sectionRNdS}

If the scalar field vanishes identically, then the Hawking mass is constant, $\varpi\equiv M$. Among these electro-vacuum solutions~\footnote{Note that in this setting Birkhoff's Theorem is less restrictive and allows solutions which do not belong to the RNdS family, see~\cite{gajic}[Theorem 1.1] and~\cite{pedroPhd}.}, the Reissner-Nordstr\"om-de Sitter (RNdS) black holes, which form a 3-parameter $(M,e,\Lambda)$ family of solutions, are of special relevance to us here.  We will only consider the solutions of this family for which $M\geq0$ and the
function $r\mapsto (1-\mu)(M,r)$ has a positive root $r_c$
such that
\begin{equation}
\label{gravityCosm1}
\partial_r(1-\mu)(M,r_c)<0 \;,
\end{equation}
and
\begin{equation}
\label{gravityCosm2}
\partial_r^2(1-\mu)(M,r_c)<0\;.
\end{equation}
These conditions are enough to identify $r=r_c$ as a non-extremal~\footnote{Non-degenerate in the terminology of~\cite{lrr}.} cosmological horizon; in the case of a vanishing charge, the condition on the second derivative is in fact a consequence of $M\geq0$.  Note that, in RNdS, $r=r_c$ is a Killing horizon~\cite{lrr} whose generator is the spacetime's static Killing vector field.

The form of the metric of these RNdS solutions in double-null coordinates is not particularly elucidative and will be omitted. On the other hand, their global causal structure is crucial to us and is completely described by the Penrose diagram in Figure~\ref{figPenrose}; note that when we refer to a member of the RNdS family we are considering the solutions corresponding to the maximal (globally hyperbolic) development of appropriate Cauchy data. In this paper we will focus on the~\emph{cosmological region} $r>r_c$, which corresponds to the causal future of the cosmological horizon $r=r_c$; more precisely, we will be interested in understanding how stable its geometry is under non-linear spherically symmetric scalar perturbations.


\subsection{The characteristic initial value problem \IVP}
\label{sectionIVP}

The goal of this section is to establish an initial value problem for the system~\eqref{EvNuLambda}--\eqref{Kbar} which captures the essential features of a cosmological region to the future of a dynamic (i.e.\ not necessarily stationary) cosmological horizon $ \mathcal{C}^+ = \left\{ (u,v): v=0 \right\} $. We will refer to this as \IVP.

We fix the coordinate $u$ (up to a translation by a constant) by setting
\begin{equation} \label{minus1}
\underbar{$\kappa$}_0(u) := \uk(u,0) \equiv -1
\end{equation}
and, for a fixed $V>0$, we consider as initial data, given on
\begin{equation}
 \left[0,+\infty\right[ \times \left\{0\right\} \cup  \left\{0\right\} \times [0,V]\;,
\end{equation}
the functions
\begin{equation}
\begin{aligned}
r_0(v) &:= r(0,v) \, ,\\
\nu_0(u) &:= \nu(u,0) \, ,\\
\zeta_0(u) &:= \zeta(u,0) \, ,\\
\lambda_0(v) &:= \lambda(0,v) \, ,\\
\varpi_0(u) &:= \varpi(u,0) \, ,\\
\theta_0(v) &:= \theta(0,v) \, ,\\
\end{aligned}
\end{equation}
with $\nu_0$, $\lambda_0$, $\theta_0$ and $\zeta_0$ continuous, and $r_0$ and $\varpi_0$  continuously differentiable.

We also impose the sign conditions
\begin{equation} \label{signconditions}
\begin{aligned}
r_0(v) & > 0 \, ,\\
\tilde{r}_0(u) & > 0 \, ,\\
\lambda_0(v) & > 0 \, ,\\
\nu_0(u) & \geq 0 \, ,\\
\end{aligned}
\end{equation}
where
\begin{equation}
\tilde{r}_0(u):=r_0(0)+\int_0^u\nu_0(u')\,du' \;.
\end{equation}
The sign for $\nu_0$ is motivated by Hawking's area law (Proposition~\ref{area_Hawking}) and the sign of $\lambda_0$ by the need to model a cosmological (expanding) region; the sign of $\underbar{$\kappa$}_0$ was chosen to accommodate~\eqref{bark}.

The overdetermined character of our system implies that the initial data is necessarily constrained. To deal with this we assume the following compatibility conditions:
\begin{align}
\label{constraints}
r_0' &= \lambda_0\,, \\
\varpi_0' &= -\frac{\zeta_0^2}{2}\,,\\
\nu_0 &= - \left( 1 - \mu)(\varpi_0,\tilde{r}_0\right)\,.
\end{align}

We will also assume that our initial data approaches (without requiring any specific decay rate) a RNdS cosmological horizon along  $v=0$.
More precisely we will assume the existence of constants $r_c>0$ and $\varpi_c\geq0$ such that
\begin{equation}
\begin{aligned}
\lim_{u \to +\infty} {\tilde{r}_0(u)} & = r_c < +\infty \, , \\
\lim_{u \to +\infty} {\varpi_0(u)} & = \varpi_c < +\infty \, , \label{RNdS1}
\end{aligned}
\end{equation}
with
\begin{equation}
\label{rc_varpic}
(1-\mu)(r_c,\varpi_c)=0\;.
\end{equation}
The condition above captures the idea of approaching a RNdS horizon. To make sure
that it is a cosmological horizon we impose~\eqref{gravityCosm1} and~\eqref{gravityCosm2}, with $M=\varpi_c$. By continuity, given $\epsilon_0>0$, we can then shift our $u$ coordinate in such a way that
\begin{equation}
\label{smallMass}
|\tilde{r}_0(u)-r_c|+ |\varpi_0(u)-\varpi_c|<\epsilon_0\;\;,\text{ for all } u\geq0\;.
\end{equation}
Then,
we can choose $\epsilon_0$ small enough such that,
for all $u\geq 0$, we have
\begin{equation}
\label{RNdS2}
\partial_r (1-\mu) (\varpi_0(u),\tilde{r}_0(u))\leq -2k< 0
\end{equation}
for some $k>0$, which can be chosen arbitrarily close to $\frac{1}{2}\partial_r(1-\mu)(\varpi_c,r_c)$ by decreasing $\epsilon_0$.
%
%
This condition, together with \eqref{EvNuLambda}, \eqref{minus1} and the fact that the range of $u$ is unbounded imply that the line $v=0$ is in fact a complete null geodesic (compare with~\cite{CGNS2}[Section 8]).

Finally, we will assume that there exists $C_d>0$ such that
\begin{equation}
 \label{boundInitialZeta}
 |\zeta_0(u)|\leq C_d\;\;,\;\text{ for all }\; u\geq0\;.
\end{equation}

This finishes our characterization of \IVP.

\begin{Remark}
Note that the conditions~\eqref{RNdS1} together with the constraints~\eqref{constraints} impose restrictions on the integrability of our data: for instance, $\varpi_0(u)\rightarrow \varpi_c$ requires that  $\zeta_0\in L^2([0,\infty))$.
\end{Remark}
\begin{Remark}
The existence of ``large'' classes of data satisfying all the previous conditions, which is far from clear a priori, follows by a simple adaptation of the techniques in~\cite{CGNS4}[Section 3]; see~\cite{pedroPhd} for details. In alternative, one can also take the perspective of Theorem~\ref{thmMain} where our characteristic initial data arises from the evolution of appropriate Cauchy initial data. The existence of such Cauchy data, close to RNdS data, is expected to follow from the non-linear stability results in~\cite{Hintz,hintzVasy}.
\end{Remark}

We end this section with a simple adaptation of~\cite{CGNS1}[Theorem 4.4]:
\begin{Thm}[Maximal development and its domain ${\cal Q}$ for \IVP]
\label{maximal}
The characteristic initial value problem \IVP, with initial data as described above, has a unique classical solution (in the sense that all the partial derivatives occurring in~\eqref{EvNuLambda}--\eqref{Kbar} are continuous) defined on a maximal past set ${\cal Q}$ containing a neighborhood~\footnote{From now on, all topological statements will refer to the topology of $\left[0,+\infty\right[ \times\left[0,V\right]$ induced by the standard topology of $\mathbb{R}^2$.} of $ \left( \left[0,+\infty\right[ \times \{0\} \right) \cup \left( \{0\} \times \left[0,V\right] \right) $ in $\left[0,+\infty\right[ \times\left[0,V\right]$.
\end{Thm}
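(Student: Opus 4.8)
The plan is to prove local existence of a classical solution near the initial curve by the standard Picard iteration, then patch these local solutions and extract the maximal past set on which they agree.

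First I would set up the characteristic IVP as a fixed-point problem in integral form. On a small coordinate rectangle $[0,\delta]\times[0,\delta']$ abutting the initial curve, I would integrate equations~\eqref{EvNuLambda}--\eqref{Kbar} against the initial data: $\lambda$ from $\partial_u\lambda = \uk\lambda\,\partial_r(1-\mu)$ integrating in $u$ from $\lambda_0(v)$; $\nu$ from $\partial_v\nu = \uk\lambda\,\partial_r(1-\mu)$ integrating in $v$ from $\nu_0(u)$; $\varpi$ from~\eqref{EvMass_u} integrating in $u$ from $\varpi_0(u)$ (equivalently~\eqref{EvMass_v} integrating in $v$ from the constraint value); $\theta$ from~\eqref{wavetheta}; $\zeta$ from~\eqref{wavezeta}; $\uk$ from~\eqref{EvKbar}; and $r$ from $\partial_u r=\nu$, $\partial_v r=\lambda$. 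The compatibility conditions~\eqref{constraints} guarantee that the two ways of reconstructing $r$, $\varpi$ (and the value of $\nu$ on $v=0$) agree at the corner and along the initial curve, so the overdetermined system is consistent. The algebraic relation~\eqref{Kbar} is used to eliminate $\Omega$ and to express things in terms of the unknowns $(r,\nu,\lambda,\varpi,\theta,\zeta,\uk)$ only; I would carry it as a constraint propagated by the evolution equations rather than as an independent equation, checking (as in~\cite{CGNS1}) that it is preserved.

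Next I would run the contraction-mapping argument. On the space of continuous tuples $(r,\nu,\lambda,\varpi,\theta,\zeta,\uk)$ close to the initial data in sup norm on a sufficiently small rectangle, the integral operator above is a contraction: the right-hand sides are smooth functions of the unknowns away from $r=0$ and $\lambda=0$, and the sign conditions $r_0>0$, $\tilde r_0>0$, $\lambda_0>0$ together with~\eqref{RNdS2} ensure we stay in the region where everything is well-defined and bounded, provided $\delta,\delta'$ are small. This yields a unique continuous solution with all the first-order equations holding in integrated form; bootstrapping through the equations then shows all the partial derivatives appearing in~\eqref{EvNuLambda}--\eqref{Kbar} are continuous, i.e.\ the solution is classical in the stated sense. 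Since the initial data are only assumed $C^0$ (and $r_0,\varpi_0$ only $C^1$), I must be careful that the regularity claimed for the solution is exactly what the equations produce and no more — this is precisely why~\cite{CGNS1}[Theorem 4.4] is phrased the way it is, and I would mirror that bookkeeping.

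Finally, to get the maximal domain ${\cal Q}$: by uniqueness, any two classical solutions on past sets containing the initial curve agree on the intersection, so I take ${\cal Q}$ to be the union of the domains of all such solutions, which is itself a past subset of $[0,+\infty[\,\times[0,V]$ containing a neighborhood of the initial curve (the local existence step provides the neighborhood). A short argument — if a boundary point of ${\cal Q}$ had its whole causal past, minus the point, inside ${\cal Q}$ with the solution bounded away from the bad sets $r=0$, $\lambda=0$ there, one could re-run local existence from data on the incoming characteristics and extend — shows ${\cal Q}$ is genuinely maximal. The main obstacle, and the only place real care is needed, is the interplay between the overdeterminacy and the low regularity: one must verify that the constraints~\eqref{constraints} are exactly the obstructions, that they propagate off the initial curve (so the solution of the reduced evolution system actually solves the full system~\eqref{EvNuLambda}--\eqref{Kbar} including~\eqref{Kbar}), and that no more smoothness is silently assumed than the hypotheses grant. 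All of this is a direct adaptation of the cited reference, which is why I expect the proof to be short.
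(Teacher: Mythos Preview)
Your proposal is correct and matches the paper's approach: the paper gives no proof at all, simply stating the theorem as ``a simple adaptation of~\cite{CGNS1}[Theorem 4.4]'', and your outline is precisely the standard Picard-iteration/constraint-propagation/union-of-domains argument carried out in that reference. The only thing to add is that you need say nothing more than you have --- the paper defers entirely to~\cite{CGNS1}.
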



\section{Preliminary results and an extension criterion}


We start by establishing some basic sign properties, including a fundamental negative upper bound for $\partial_r(1-\mu)$~\eqref{redshiftbound} that corresponds to a global redshift.

\begin{lemma}\label{signsofquantities}
For the initial value problem \IVP, the following conditions are satisfied in ${\cal Q}$:
\begin{equation}
\label{lambdaSign}
 \lambda>0\;,
\end{equation}
\begin{equation}
\label{ukLower}
 \uk\leq-1\;,
\end{equation}
\begin{equation}
\label{nuSign}
 \nu \geq 0\,, \text{ with } \nu > 0 \text{ in } {\cal Q} \setminus \mathcal{C}^+\;,
\end{equation}
\begin{equation}
\label{lowerR}
 r \geq r(0,0)\;,
\end{equation}
\begin{equation}
 \partial_u \varpi \leq 0\;,
\end{equation}
\begin{equation}
 \partial_v \varpi \leq 0\;,
\end{equation}
\begin{equation}
\label{redshiftbound}
\partial_r(1-\mu) \leq -2k\;,\text{ for } k>0 \text{ as in~\eqref{RNdS2}}\,,
\end{equation}
\begin{equation}
\label{redshiftbound2}
 1-\mu \leq 0\,, \text{ with } 1-\mu < 0 \text{ in } {\cal Q} \setminus \mathcal{C}^+\;.
\end{equation}
\end{lemma}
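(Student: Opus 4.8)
The plan is to establish these inequalities by a bootstrap/continuity argument on the maximal past set $\mathcal{Q}$, propagating the sign conditions off the initial data $v=0$ and $u=0$ via the characteristic system \eqref{EvNuLambda}--\eqref{Kbar}. The key observation is that each of the desired inequalities is an immediate consequence of the others together with a first-order ODE along one of the characteristic directions, so one sets up a coupled "open-closed" argument: let $\mathcal{R}\subset\mathcal{Q}$ be the set of points where \eqref{lambdaSign}--\eqref{redshiftbound2} all hold; show $\mathcal{R}$ is nonempty (it contains the initial segments by the data assumptions, in particular \eqref{minus1}, \eqref{RNdS2}, $\lambda_0>0$, $\nu_0\ge 0$), relatively open (from the strict versions of the inequalities, which one must be careful to propagate), and relatively closed (by continuity), hence $\mathcal{R}=\mathcal{Q}$ since the relevant region is connected. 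Because $\mathcal{Q}$ is a past set containing a neighborhood of the initial segments, the standard trick is to integrate along past-directed characteristics back to the initial data.

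Concretely, the logical skeleton I would use is as follows. First, \eqref{lambdaSign}: integrating \eqref{EvNuLambda} in $u$, $\lambda(u,v)=\lambda(0,v)\exp\left(\int_0^u \uk\,\partial_r(1-\mu)\,du'\right)$ once one knows $\uk$ and $\partial_r(1-\mu)$ are controlled, or more robustly observe that by \eqref{EvKbar} the quantity $\uk$ cannot change sign, and \eqref{Kbar} ties the sign of $\nu$ to that of $\uk(1-\mu)$. So the cleanest order is: (a) $\uk\le -1$ from \eqref{minus1} and \eqref{EvKbar} (integrate $\partial_v\uk = \uk\theta^2/(r\lambda)$ in $v$: assuming $\lambda>0$ on the region reached, $\uk$ stays negative and $|\uk|$ is nondecreasing in $v$, giving $\uk\le -1$); (b) $\lambda>0$ from \eqref{EvKbar} rearranged, or from \eqref{EvNuLambda} and an integrating factor, using $\lambda_0>0$; (c) $1-\mu\le 0$ from \eqref{1-mu2}, $1-\mu=-4\Omega^{-2}\lambda\nu$, once $\nu\ge 0$ and $\lambda>0$ — so actually $\nu\ge 0$ must come first; (d) $\nu\ge 0$: integrate \eqref{raychaudhuri_u}, i.e. $\partial_u(\Omega^{-2}\nu)=-r\Omega^{-2}\zeta^2\cdot\frac{1}{r^2}$ — wait, in first-order form use $\partial_u\nu = \partial_v\nu$ from \eqref{EvNuLambda} is not it; rather integrate along $u$ from the data $\nu_0\ge 0$, using that $\nu=\uk(1-\mu)$ and the Raychaudhuri equation \eqref{raychaudhuri_u} forces $\Omega^{-2}\nu$ to be nonincreasing in... no, one uses instead that $\nu$ vanishing would have to persist. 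The honest approach: \eqref{raychaudhuri_v} shows $\Omega^{-2}\lambda$ is nonincreasing in $v$, hence $\uk=-\frac14\Omega^2\lambda^{-1}$ analysis; and \eqref{raychaudhuri_u} with the constraint $\nu_0=-(1-\mu)\ge 0$ (since $\nu_0\ge 0$ is imposed) propagates $\nu\ge 0$ via $\partial_u(\Omega^{-2}\nu)\le 0$ combined with a lower bound — this is the delicate point.

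Then \eqref{lowerR} follows by integrating $\nu=\partial_u r\ge 0$ in $u$ and $\lambda=\partial_v r>0$ in $v$ from $r(0,0)$; $\partial_u\varpi\le 0$ is immediate from \eqref{EvMass_u}, $\partial_u\varpi=\zeta^2/(2\uk)$ with $\uk\le -1<0$; $\partial_v\varpi\le 0$ from \eqref{EvMass_v}, $\partial_v\varpi=\frac{\theta^2}{2\lambda}(1-\mu)$ with $\lambda>0$ and $1-\mu\le 0$. For \eqref{redshiftbound}: since $\varpi$ is nonincreasing in both variables, $\varpi(u,v)\le\varpi_0(0)$, and $r$ is bounded below by $r(0,0)$; combined with \eqref{smallMass}--\eqref{RNdS2} and monotonicity/continuity of $(\varpi,r)\mapsto \partial_r(1-\mu)$, one argues $(\varpi(u,v),r(u,v))$ stays in the region where $\partial_r(1-\mu)\le -2k$ — this needs that $r$ doesn't yet range outside where the estimate \eqref{RNdS2} was derived, which may require shrinking $\mathcal{Q}$ or using that on $\mathcal{Q}$ we stay near the horizon; here one would invoke that $\partial_r^2(1-\mu)<0$ near $r_c$ and $\varpi$ only decreases, so $\partial_r(1-\mu)$ only becomes more negative as $r$ increases and $\varpi$ decreases. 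Finally \eqref{redshiftbound2}'s strict part: off $\mathcal{C}^+$, $\nu>0$ strictly (from \eqref{nuSign}, itself needing that $\zeta$ is not identically zero or a strong maximum principle argument — actually $\nu>0$ off $\mathcal{C}^+$ because $\nu=\uk(1-\mu)$ and $1-\mu<0$ strictly there), hence $1-\mu=-4\Omega^{-2}\lambda\nu<0$.

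The main obstacle I anticipate is the simultaneous nature of the argument: several of these signs are genuinely coupled (e.g. $\lambda>0$ is used to get $\uk\le -1$, which is used to get $\partial_u\varpi\le 0$, which feeds into $\partial_r(1-\mu)\le -2k$, which feeds back into the evolution \eqref{EvNuLambda} for $\lambda$ and $\nu$), so the clean linear order above is a bit of a fiction and one really must run the connectedness/continuity argument on the conjunction of all the (strict, where applicable) inequalities at once. The second genuine difficulty is \eqref{redshiftbound}: translating the data-level bound \eqref{RNdS2}, valid on the initial segment where $\tilde r_0\approx r_c$, into a bound valid throughout $\mathcal{Q}$ where $r$ may have grown — this is where one exploits $\partial_r^2(1-\mu)(\varpi_c,r_c)<0$ from \eqref{gravityCosm2} together with $\partial_u\varpi,\partial_v\varpi\le 0$ (so $\varpi\le\varpi_c+\epsilon_0$) to conclude the redshift only improves for $r\ge r(0,0)$; making this monotonicity-in-$r$ argument precise, possibly with a further smallness condition on $\epsilon_0$, is the technical crux.
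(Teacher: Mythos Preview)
Your overall continuity strategy is what the paper uses, but you over-complicate the setup and, more importantly, miss the one genuinely delicate step. The paper does \emph{not} bootstrap all eight inequalities simultaneously. Three of them hold on all of $\mathcal{Q}$ with no bootstrap whatsoever: from \eqref{EvNuLambda} one has $\lambda(u,v)=\lambda(0,v)\exp\bigl(\int_0^u \uk\,\partial_r(1-\mu)\,du'\bigr)$, so $\lambda>0$ regardless of the sign of the exponent (you wrote this formula but then said it needs $\uk$ and $\partial_r(1-\mu)$ ``controlled'' --- it doesn't); then \eqref{EvKbar} with $\lambda>0$ and $\uk(u,0)=-1$ gives $\uk\le-1$; and \eqref{EvMass_u} gives $\partial_u\varpi\le0$. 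Only after these three are in hand does the paper define $\tilde{\mathcal{Q}}=\{(u,v):\nu(u,\tilde v)\ge0\text{ for all }\tilde v\in[0,v]\}$ and run an open--closed argument on this single set.

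The real gap in your proposal is the propagation of $\nu\ge0$. You write down \eqref{EvNuLambda}, say ``$\partial_u\nu=\partial_v\nu$ from \eqref{EvNuLambda} is not it'', and turn to the Raychaudhuri equations --- but \eqref{EvNuLambda} \emph{is} exactly it. Inside $\tilde{\mathcal{Q}}$ one has (after deriving $r\ge r(0,0)$, $1-\mu\le0$, $\partial_v\varpi\le0$, and the redshift bound $\partial_r(1-\mu)\le-2k$) that $\partial_v\nu=\uk\,\lambda\,\partial_r(1-\mu)>0$, since each factor has a definite sign. Integrating in $v$ from $\nu(u,0)\ge0$ then gives $\nu>0$ for $v>0$, which is both the strict inequality off $\mathcal{C}^+$ and the openness needed to conclude $\tilde{\mathcal{Q}}=\mathcal{Q}$. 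Your Raychaudhuri attempt cannot work: \eqref{raychaudhuri_u} says $\Omega^{-2}\nu$ is nonincreasing in $u$, which is the wrong direction relative to the data $\nu_0(u)=\nu(u,0)$ and the wrong monotonicity in any case. Your alternative argument for strict positivity (``$\nu>0$ because $\nu=\uk(1-\mu)$ and $1-\mu<0$ strictly'') is circular, since the strict sign of $1-\mu$ is itself to be deduced from $\nu>0$ via \eqref{Kbar}.

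Your discussion of the redshift bound \eqref{redshiftbound} is essentially correct and matches the paper: one uses $\partial_\varpi\partial_r(1-\mu)=2/r^2>0$ together with $\varpi(u,v)\le\varpi(u,0)$, and then monotonicity of $r\mapsto\partial_r(1-\mu)(\varpi(u,0),r)$ for $r\ge r(u,0)$ (this is where \eqref{gravityCosm2} and smallness of $\epsilon_0$ enter), to reduce to the initial bound \eqref{RNdS2}.
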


\begin{proof}

Recall that $\lambda(0,v) > 0$ and notice how the evolution equation for $\lambda$ \eqref{EvNuLambda} implies that it cannot change sign (since the solution is an exponential times the initial data); that is, $\lambda>0$ in $\cal Q$. It then follows from~\eqref{EvKbar} and the fact that $\uk(u,0)\equiv-1$ that $\uk\leq -1$. As an immediate consequence~\eqref{EvMass_u} tells us that $\partial_u \varpi \leq 0$.

Now, let $\tilde{{\cal Q}} = \left\{ (u,v) \in {\cal Q}: \nu(u,\tilde{v}) \geq 0 \, , \forall \tilde{v} \in \left[ 0,v \right] \right\}$. In this subset, the signs of $\nu$ and $\lambda$ immediately imply that $r \geq r(0,0)$ and, since $ \nu = (1-\mu) \underbar{$\kappa$} $, one has $1-\mu \leq 0$ (with equality if and only if $\nu = 0$), and so $\partial_v \varpi \leq 0$ due to \eqref{EvMass_v}.

Furthermore,
\begin{equation*}
\begin{aligned}
& \partial_\varpi \partial_r (1-\mu) = \frac{2}{r^2} > 0 \, ,\\
\end{aligned}
\end{equation*}
and, for $v>0$,
$$\lambda > 0 \Rightarrow r(u,v) > r(u,0) \, .$$
By inspection of the graph of $r\mapsto 1-\mu(\varpi,r)$, we see that, for a sufficiently small $\epsilon_0$ in~\eqref{smallMass}, we have in $\tilde{{\cal Q}}$
\begin{equation}\label{redshiftbound3}
\begin{aligned}
\partial_r \left(1-\mu\right) \left(\varpi(u,v) , r(u,v)\right)  &  \leq \partial_r \left(1-\mu\right) \left(\varpi(u,0) , r(u,v)\right) \\
& \leq  \partial_r \left(1-\mu\right) \left(\varpi(u,0) , r(u,0)\right) \\
& \, \leq \,-2k \, ,
\end{aligned}
\end{equation}
where the last inequalities follow from~\eqref{RNdS2}.

Knowing the signs of the quantities above, \eqref{EvNuLambda} implies
\begin{equation}
\partial_v \nu = \underbar{$\kappa$} \lambda \partial_r \left(1-\mu\right) > 0 \label{EvNu}\;,
\end{equation}
in $\tilde{{\cal Q}}$. So, with the initial condition $\nu \geq 0$ on $\mathcal{C}^+$, one has $\nu > 0$ in $\tilde{{\cal Q}} \backslash \mathcal{C}^+$.

It now suffices to show that $\tilde{{\cal Q}} = {\cal Q}$. To this effect, define, for each $u \geq 0$, the sets $\mathcal{V}_u = \left\{ v: (u,v) \in {\cal Q} \right\}$ and $\tilde{\mathcal{V}}_u = \left\{ v: (u,v) \in \tilde{{\cal Q}} \right\}$. We will show that $\tilde{\mathcal{V}}_u$ is open and closed in $\mathcal{V}_u$.

Let $ \left\{ v_n \right\}_{n \in \mathbb{N}} \subset \tilde{\mathcal{V}}_u $ be a sequence with $v_n \to v^* \in \mathcal{V}_u$. If $\exists k \in \mathbb{N}$ such that $v_k > v^*$, then $\nu(u,v') \geq 0$ $\forall v' \leq v_k$ and, in particular, $\forall v' \leq v^*$; so, $(u,v^*) \in \tilde{{\cal Q}}$ and $v^* \in \tilde{\mathcal{V}}_u$. On the other hand, if there exists no such $k$, then there is an increasing subsequence $v_{n_m} \to v^*$; this means that $\left[0,v_{n_m}\right] \subset \tilde{\mathcal{V}}_u$, so $\bigcup_{m \in \mathbb{N}} \left[0,v_{n_m}\right] \subset \tilde{\mathcal{V}}_u$ and $\left[0,v^*\right[ \subset \tilde{\mathcal{V}}_u$, whence, by continuity, $\nu(u,v^*) \geq 0$ and, therefore, $v^* \in \tilde{\mathcal{V}}_u$. 

Suppose now that $v^* \in \tilde{\mathcal{V}}_u$, meaning $\nu(u,v') \geq 0$ $\forall v' \in [0,v^*]$; from \eqref{EvNu}, it follows that $\partial_v {\nu} (u,v^*) > 0$, and so, by continuity, $\exists \epsilon > 0$ such that $\partial_v {\nu} (u,v') > 0$ $\forall v' \in [v^*,v^*+\epsilon]$. This implies that $\nu(u,v') \geq 0$ $\forall v' \in [0,v^*+\epsilon]$ or, in other words, $(u,v^*+\epsilon) \in \tilde{{\cal Q}}$; thus, $\left[0,v^*+\epsilon\right[ \subset \tilde{\mathcal{V}}_u$. 

Since $\tilde{\mathcal{V}}_u\neq \varnothing$ (because $0 \in \tilde{\mathcal{V}}_u$) and $\mathcal{V}_u$ is connected (because ${\cal Q}$ is a past set), we have $\tilde{\mathcal{V}}_u=\mathcal{V}_u$ for all $u \geq 0$, whence $\tilde{{\cal Q}}={\cal Q}$.
\end{proof}

\begin{Remark} \label{confusao}
The Einstein equations do not propagate the sign of $\partial_vr=\lambda$ along a line of constant $v$, as can be attested by considering a line that goes from black hole exterior to the black hole interior of the RNdS solution. In contrast, the first order system~\eqref{EvNuLambda}-\eqref{Kbar} does propagate this sign; however, this is just a consistency check of the system, since the definition of \underbar{$\kappa$} requires that $\lambda$ does not change sign.    
\end{Remark}

Note that, in view of~\eqref{lambdaSign} and~\eqref{nuSign}, the curves of constant $r$ are spacelike in ${\cal Q}\setminus\{v=0\}$ and can be parameterized by
\begin{equation}
 u\mapsto (u,v_r(u))\;,
\end{equation}
or
\begin{equation}
\label{u_r}
 v\mapsto (u_r(v),v)\;,
\end{equation}
where $u_r$ and $v_r$ are $C^1$, but with domains unknown to us at the moment.

We can now easily improve some of the previous sign properties to more detailed bounds.
\begin{lemma}\label{bounds}
For the initial value problem \IVP, the following estimates hold in ${\cal Q}$:
\begin{equation}
-\frac{\Lambda}{6}r^3 + \frac{r}{2}+\frac{e^2}{2r} \, \leq \, \varpi \, \leq \, \varpi(0,0)\;, \label{boundsb}
\end{equation}
\begin{equation}
1-\frac{2\varpi(0,0)}{r} + \frac{e^2}{r^2} - \frac{\Lambda}{3} r^2 \, \leq \, 1-\mu \, \leq \, 0\;, \label{boundsc}
\end{equation}
\begin{equation}
-\Lambda r + \frac{1}{r} - \frac{e^2}{r^3} \, \leq \, \partial_r (1-\mu) \, \leq \, \frac{2\varpi(0,0)}{r^2} - \frac{2e^2}{r^3} - \frac{2\Lambda}{3} r\;. \label{boundse}
\end{equation}
%
\end{lemma}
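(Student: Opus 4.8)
The plan is to obtain all six inequalities as elementary algebraic consequences of Lemma~\ref{signsofquantities}, with no new analytic input. The only facts I would use are: $\partial_u\varpi\le 0$ and $\partial_v\varpi\le 0$ in ${\cal Q}$; $1-\mu\le 0$ in ${\cal Q}$ (from \eqref{redshiftbound2}); $r\ge r(0,0)>0$; and the explicit expressions coming from the definition \eqref{1-mu},
\[
1-\mu = 1 - \frac{2\varpi}{r} + \frac{e^2}{r^2} - \frac{\Lambda}{3}r^2, \qquad
\partial_r(1-\mu) = \frac{2\varpi}{r^2} - \frac{2e^2}{r^3} - \frac{2\Lambda}{3}r.
\]
In each case the $\varpi$-dependence is through a single monotone term, so the strategy is simply to feed in the appropriate bound on $\varpi$.

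First I would establish the two bounds on $\varpi$ in \eqref{boundsb}. For the upper bound, fix $(u,v)\in{\cal Q}$; since ${\cal Q}$ is a past set and $(u,v)$ has nonnegative coordinates, the rectangle $[0,u]\times[0,v]$ lies in ${\cal Q}$, so one may join $(u,v)$ to $(0,0)$ by the staircase $(u,v)\to(0,v)\to(0,0)$; along the first leg $u$ decreases and $\partial_u\varpi\le 0$, along the second $v$ decreases and $\partial_v\varpi\le 0$, hence $\varpi(u,v)\le\varpi(0,v)\le\varpi(0,0)$. For the lower bound, I would just rewrite $1-\mu\le 0$ using the explicit expression: since $r>0$, the inequality $1 - \frac{2\varpi}{r} + \frac{e^2}{r^2} - \frac{\Lambda}{3}r^2\le 0$ is equivalent to $\varpi\ge \frac{r}{2} + \frac{e^2}{2r} - \frac{\Lambda}{6}r^3$, which is exactly~\eqref{boundsb}.

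Finally I would read off the remaining four bounds. The upper bound in \eqref{boundsc} is \eqref{redshiftbound2}; the lower bound follows by substituting $\varpi\le\varpi(0,0)$ into the expression for $1-\mu$, whose only $\varpi$-term $-\frac{2\varpi}{r}$ is most negative for the largest $\varpi$, giving $1-\mu\ge 1 - \frac{2\varpi(0,0)}{r} + \frac{e^2}{r^2} - \frac{\Lambda}{3}r^2$. Similarly, in $\partial_r(1-\mu)$ the only $\varpi$-term is $\frac{2\varpi}{r^2}$, which is increasing in $\varpi$: the upper bound $\varpi\le\varpi(0,0)$ yields the right-hand inequality of~\eqref{boundse}, while the lower bound $\varpi\ge \frac{r}{2} + \frac{e^2}{2r} - \frac{\Lambda}{6}r^3$ gives $\frac{2\varpi}{r^2}\ge \frac{1}{r} + \frac{e^2}{r^3} - \frac{\Lambda}{3}r$ and hence $\partial_r(1-\mu)\ge \frac{1}{r} - \frac{e^2}{r^3} - \Lambda r$, the left-hand inequality of~\eqref{boundse}. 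Since every step is a direct substitution, there is no genuine obstacle; the only point requiring a word of care is the monotonicity argument for $\varpi\le\varpi(0,0)$, where one must use that ${\cal Q}$ is a past set so that the staircase path stays inside the domain.
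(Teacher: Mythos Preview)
Your proof is correct and follows essentially the same approach as the paper: both derive all six inequalities as direct algebraic consequences of the sign information in Lemma~\ref{signsofquantities}, using only the monotonicity of $\varpi$, the inequality $1-\mu\le 0$, and the explicit formula for $\partial_r(1-\mu)$. Your treatment is slightly more explicit in justifying $\varpi\le\varpi(0,0)$ via a staircase path inside the past set ${\cal Q}$, but otherwise the arguments are identical.
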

\begin{proof}
%

Since we know that $\partial_u\varpi$, $\partial_v\varpi$ and $1-\mu$ are all non-positive, we get
\begin{equation}
1-\mu \leq 0 \Leftrightarrow -\frac{\Lambda}{6} r^3 + \frac{r}{2} + \frac{e^2}{2r} \leq \varpi
\end{equation}
and
\begin{equation}
\varpi \leq \varpi(0,0)  \Leftrightarrow
 1-\mu \geq 1 - \frac{2\varpi(0,0)}{r} + \frac{e^2}{r^2} - \frac{\Lambda}{3}r^2 \, .
\end{equation}
Applying the bounds for $\varpi$ to the identity
\begin{equation}
\partial_r (1-\mu) \, = \, \frac{2\varpi}{r^2} - \frac{2e^2}{r^3} - \frac{2\Lambda}{3}r \, ,
\end{equation}
leads to~\eqref{boundse}.
\end{proof}

Restricting ourselves  to a region with bounded radius function we can obtain further bounds:

\begin{lemma}\label{boundsR}
For the initial value problem \IVP and $R>\rz$ there exists a constant $C_{d,R}$, depending only on $R$ and on the size of the initial data, such that, in ${\cal Q}\cap\{r\leq R\}$,
\begin{equation}
\label{boundInR}
 \max\left\{\left|\frac{\theta}{\lambda}\right|, \left|\uk\right| , \left|\nu\right| , \left|\zeta\right|\right\} \leq C_{d,R}\;.
\end{equation}
\end{lemma}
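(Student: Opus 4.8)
\emph{Strategy.} Write $\mathcal{Q}_R:=\mathcal{Q}\cap\{r\le R\}$. The plan is a bootstrap/continuity argument built from three transport‑type estimates, one for each of $\uk$, $\theta/\lambda$, $\zeta$; the bound for $\nu$ is then free, since $\nu=\uk(1-\mu)$ and $|1-\mu|\le C(R)$ in $\mathcal{Q}_R$ by \eqref{boundsc}, with $C(R)$ depending only on $R,e,\Lambda,\varpi(0,0)$. All four quantities are also bounded on the data segments: on $\{0\}\times[0,V]$ by continuity of the classical solution on a compact set, and on $\mathcal{C}^+$ because $\uk\equiv-1$, $|\zeta|=|\zeta_0|\le C_d$, $\nu=\nu_0$ is bounded, and $|\theta/\lambda|$ is bounded there by estimate (ii) below specialized to $v=0$.

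\emph{Three estimates.} (i) \emph{$\uk$:} integrating \eqref{EvKbar}, $\partial_v\ln|\uk|=\theta^2/(r\lambda)\ge0$, from $v=0$ (where $|\uk|=1$) and changing variables by $dr=\lambda\,dv$ along $u=\mathrm{const}$ gives $\ln|\uk(u,v)|=\int_{r(u,0)}^{r(u,v)}r^{-1}(\theta/\lambda)^2\,dr$; equivalently, by $\theta^2/(r\lambda)=2|\partial_v\varpi|/(r|1-\mu|)$ from \eqref{EvMass_v}, the contribution of any $v$‑interval on which $r\ge r_\ast>r(u,0)$ is at most $\Delta\varpi(R)/\big(\rz\,k\,(r_\ast-r(u,0))\big)$, where $|1-\mu|\ge 2k(r-r(u,0))$ (as in \eqref{redshiftbound3}) and $\Delta\varpi(R):=\varpi(0,0)-\inf_{[\rz,R]}\big(-\tfrac{\Lambda}{6}r^3+\tfrac r2+\tfrac{e^2}{2r}\big)$ by \eqref{boundsb}. (ii) \emph{$\theta/\lambda$:} dividing $\partial_u\theta=-\zeta\lambda/r$ by $\lambda$ and using $\partial_u\ln\lambda=\uk\,\partial_r(1-\mu)$ yields $\partial_u(\theta/\lambda)=-\zeta/r-(\theta/\lambda)\,\uk\,\partial_r(1-\mu)$ with redshift damping $\uk\,\partial_r(1-\mu)\ge2k>0$ (Lemma \ref{signsofquantities}); integrating from $u=0$, $|\theta/\lambda|(u,v)\le\|\theta_0/\lambda_0\|_{L^\infty[0,V]}+\tfrac{1}{2k\,\rz}\sup_{\mathcal{Q}_R}|\zeta|$. (iii) \emph{$\zeta$:} integrating $\partial_v\zeta=-\theta\nu/r$ from $v=0$ and applying Cauchy--Schwarz with weight $|1-\mu|/\lambda$, then using \eqref{EvMass_v}, $\nu=\uk(1-\mu)$ and $|1-\mu|\le C(R)$,
\[
|\zeta|(u,v)\le C_d+\Big(\int_0^v\tfrac{2|\partial_v\varpi|}{r}\,dv'\Big)^{\!1/2}\Big(\int_0^v\tfrac{\uk^2|1-\mu|\lambda}{r}\,dv'\Big)^{\!1/2}\le C_d+\tfrac{\sup_{\mathcal{Q}_R}|\uk|}{\rz}\sqrt{2\Delta\varpi(R)\,C(R)\,(R-\rz)}\,.
\]

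\emph{Closing the loop.} Estimates (ii)--(iii) bound $\sup|\theta/\lambda|$ and $\sup|\zeta|$ through $\sup|\uk|$, while (i) bounds $\sup|\uk|$ through $\sup(\theta/\lambda)^2$ --- an exponential dependence, so the system does not close on $\mathcal{Q}_R$ directly. I would break the loop in two steps. \emph{Base case:} fix $\eta_1>0$ small, put $R_1:=r_c+\eta_1$; on $\mathcal{Q}_{R_1}$ the quantities $\Delta\varpi(R_1)$, $C(R_1)$ and $R_1-\rz$ are all small, because $\rz\in(r_c-\epsilon_0,r_c]$ by \eqref{smallMass} (with $\epsilon_0$ small) and $\eta_1$ is at our disposal, so combining (i) in its first form with (ii) and (iii) closes a bootstrap for $A^\ast:=2\|\theta_0/\lambda_0\|_{L^\infty[0,V]}+C_d/(k\,\rz)$ --- the coefficient multiplying $A^\ast$ in the resulting inequality is $<1$ --- and, the solution being classical, the set where $|\theta/\lambda|\le A^\ast$ in $\mathcal{Q}_{R_1}$ is open, closed and nonempty, so all four quantities are bounded on $\mathcal{Q}_{R_1}$, say by $C_1$. \emph{Extension:} for $(u,v)\in\mathcal{Q}_R$ split the $u=\mathrm{const}$ integral in (i) at the fixed radius $r_c+\eta_1$: on the inner part ($r\le r_c+\eta_1$, i.e.\ within $\mathcal{Q}_{R_1}$) bound $\theta/\lambda$ by $C_1$; on the outer part $r-r(u,0)\ge(r_c+\eta_1)-r_c=\eta_1>0$ since $r(u,0)<r_c$, so $|1-\mu|\ge2k\eta_1$ and the mass‑drop form of (i) applies. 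This gives a uniform bound $|\uk|\le K^\ast:=\exp\!\big(C_1^2\ln\tfrac{r_c+\eta_1}{\rz}+\tfrac{\Delta\varpi(R)}{\rz\,k\,\eta_1}\big)$ on all of $\mathcal{Q}_R$; then $|\nu|\le C(R)K^\ast$, estimate (iii) gives $|\zeta|\le C_d+\tfrac{K^\ast}{\rz}\sqrt{2\Delta\varpi(R)C(R)(R-\rz)}$, and estimate (ii) bounds $|\theta/\lambda|$.

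\emph{Main obstacle.} The crux is the nonlinear feedback between $\uk$ (equivalently $\Omega^2\lambda^{-1}$) and $\theta/\lambda,\zeta$: the bound for $\uk$ is exponential in $\sup(\theta/\lambda)^2$, so a single bootstrap over $\{r\le R\}$ fails. The resolution relies on the fact that this exponential only bites near $\mathcal{C}^+$, where $1-\mu$ degenerates, and there the base‑case smallness --- forced by the near‑RNdS data assumption --- closes the loop, whereas away from $\mathcal{C}^+$ one has $|1-\mu|$ bounded below and can trade the bad exponential for the monotone Hawking mass $\varpi$. The remaining ingredients --- the signs, the redshift damping, and the change of variables $dr=\lambda\,dv$ --- are routine given Lemmas \ref{signsofquantities} and \ref{bounds}.
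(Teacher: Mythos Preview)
Your transport estimates (ii) and (iii) are fine, and you have correctly identified the feedback loop. The gap is in the extension step, specifically in the claimed lower bound $|1-\mu|\ge 2k\bigl(r-r(u,0)\bigr)$ on the outer part. Reference~\eqref{redshiftbound3} controls $\partial_r(1-\mu)$, not $(1-\mu)$ itself, and the two are not interchangeable here: along $u=\text{const}$ one has $\partial_v(1-\mu)=\partial_r(1-\mu)\,\lambda+(-2/r)\,\partial_v\varpi$, and since $\partial_v\varpi\le 0$ the second term is \emph{nonnegative}. The mass drop in $v$ therefore pushes $(1-\mu)$ \emph{upward} (toward zero), and nothing prevents $|1-\mu|$ from being arbitrarily small on $\{r\ge r_c+\eta_1\}$ --- the only a~priori information is $(1-\mu)<0$ there. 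Consequently the ``mass--drop form'' of~(i), which needs $|1-\mu|$ bounded below to turn $\int 2|\partial_v\varpi|/(r|1-\mu|)\,dv$ into $\Delta\varpi/(\rz k\eta_1)$, does not close. (Note also that $2|\partial_v\varpi|/(r|1-\mu|)=\theta^2/(r\lambda)$ identically, so the two ``forms'' of~(i) are the same integral; the mass--drop rewriting buys nothing without the lower bound on $|1-\mu|$.)

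The paper sidesteps the loop in one stroke. From \eqref{EvMass_u} and \eqref{boundsb} one has the \emph{$u$-direction} energy bound $\int_0^u \zeta^2/(-\uk)\,du'\le C_{d,R}$. In the Duhamel formula for $\theta/\lambda$ one then writes $|\zeta|/r=(|\zeta|/\sqrt{-\uk})\cdot(\sqrt{-\uk}/r)$ and applies Cauchy--Schwarz; the second factor becomes
\[
\int_0^u(-\uk)\,e^{-4k\int_{u'}^u(-\uk)\,du''}\,du'
=\Bigl[\tfrac{1}{4k}\,e^{-4k\int_{u'}^u(-\uk)\,du''}\Bigr]_{u'=0}^{u'=u}\le \tfrac{1}{4k},
\]
an exact derivative bounded \emph{independently of the size of $\uk$}. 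This yields $|\theta/\lambda|\le C_{d,R}$ directly, after which the bounds on $\uk$, $\nu$, $\zeta$ follow in the order you indicate. No bootstrap, no base case, and no lower bound on $|1-\mu|$ are needed.
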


\begin{proof}

Integrating~\eqref{EvMass_u} in ${\cal Q}\cap\{r\leq R\}$, while using the bounds~\eqref{boundsb} and~\eqref{lowerR}, we obtain
\begin{equation}
\int_0^u {\frac{\zeta^2}{- \underbar{$\kappa$} }} du =2\left(\varpi(0,v)-\varpi(u,v) \right)\leq  C_{d,R} \, .
\end{equation}
Then, integrating the evolution equation
\begin{equation}\label{thetaoverlambda}
\partial_u \left( \frac{\theta}{\lambda} \right) = - \frac{\zeta}{r} - \frac{\theta}{\lambda} \underbar{$\kappa$} \partial_r (1-\mu) \, ,
\end{equation}
yields, in view of the previous and~\eqref{redshiftbound},
\begin{equation*}
\begin{aligned}
\left| \frac{\theta}{\lambda} (u,v) \right| & \leq \left| \frac{\theta}{\lambda} (0,v) \right| e^{ - \int_0^u \uk \partial_r (1-\mu)(u',v) du'} + \int_0^u { \frac{|\zeta|}{r} (u'v)e^{ - \int_{u'}^u \underbar{$\kappa$} \partial_r (1-\mu) (u'',v)du''} du'} \label{tlbound} \\
&\leq  C_d + \frac{1}{R}\int_0^u { \frac{|\zeta|}{\sqrt{-\uk}} \sqrt{-\uk} \, e^{ - \int_{u'}^u \underbar{$\kappa$} \partial_r (1-\mu) du''} du'}  \\
&\leq C_d +\frac{1}{R} \left( \int_0^u { \frac{|\zeta|^2}{-\uk} (u',v) du' } \right)^\frac{1}{2} \left( \int_0^u { -\underbar{$\kappa$} \, e^{ 2 \int_{u'}^u -\underbar{$\kappa$} \partial_r (1-\mu) du''} du'} \right)^\frac{1}{2}  \\
&\leq C_d + C_{d,R}\left( \int_0^u { -\uk \, e^{ -4k \int_{u'}^u -\underbar{$\kappa$} du''} du'} \right)^\frac{1}{2}  \\
 &= C_d  + C_{d,R} \left( \left[ \frac{1}{4k} \, e^{ - 4k\int_{u'}^u |\underbar{$\kappa$}| du''} \right]_{u'=0}^{u'=u} \right)^\frac{1}{2} \\
&\leq C_{d,R} \;,
\end{aligned}
\end{equation*}
where $C_d$ represents a uniform bound on the initial data.

As a consequence, we are now  able to estimate $|\uk|$. In fact, using the evolution equation \eqref{EvKbar},

\begin{equation}\label{kbarr0}
\begin{aligned}
\left| \underbar{$\kappa$} (u,v) \right| \, & = \, \exp \left[ \int_0^{v} \left| \frac{\theta}{\lambda} \right|^2 \frac{\lambda}{r} (u,v') \, dv' \right]\\
& \leq \, \exp \left[ C_{d,R} \int_0^{v} \frac{\partial_v r}{r} (u,v') \, dv' \right] \\
& = \, \exp \left[ C_{d,R} \int_{r(u,0)}^{r(u,v)} \frac{dr'}{r'} \right] \\
& \leq \, \left[ \frac{r(u,v)}{r(0,0)} \right]^{C_{d,R}} \; .
\end{aligned}
\end{equation}

Combining this estimate with~\eqref{boundsc} immediately gives the bound for $|\nu|=\nu$.
Recalling that $\nu$ increases with $v$ in $\cal Q$, we can integrate~\eqref{wavezeta} and, using~\eqref{boundInitialZeta}, obtain the estimate
\begin{equation}\label{zetar0}
\begin{aligned}
 \left|\zeta(u,v)\right| \, & \leq \, \left|\zeta(u,0)\right| \, + \, \int_0^{v} \left( \left| \frac{\theta}{\lambda} \right| \frac{\lambda \nu}{r} \right) (u,v') \, dv' \\
& \leq  C_d \, + \, C_{d,R} \int_0^{v} \left( \frac{\lambda \nu}{r} \right) (u,v') \, dv'  \\
& \leq C_d\, + \, C_{d,R} \, \nu(u,v) \, \int_0^{v}  \frac{\lambda}{r}  (u,v') \, dv' \\
& \leq C_d\, + \, C_{d,R} \log \left| \frac{r(u,v)}{r(u,0)} \right|  \\
& \leq C_{d,R}\;.
\end{aligned}
\end{equation}

\end{proof}

As a result of all the estimates obtained in this section, we have uniform bounds in any region of the form ${\cal Q}\cap\{r\leq R\}$ for almost all quantities in terms of data and $R$. For example, from \eqref{EvMass_u} and \eqref{RNdS1} we easily obtain
\begin{equation}
\label{massAbove}
|\varpi (u,v)| \leq C_{d,R} \;.
\end{equation}
The single exception is $\lambda$, for which the best bound available in this region (compare with the upcoming~\eqref{lambdaLower}), is
\begin{equation}
\label{lambdaAbove}
 \lambda (u,v)  = \lambda (0,v) e^{ \int_0^u \underbar{$\kappa$} \partial_r \left( 1-\mu \right) du' } \leq C_d e^{C_{d,R}u} \;,
\end{equation}
which follows from~\eqref{EvNuLambda} and the bound for $\uk$. This is nonetheless enough to establish the following extension criterion, whose standard proof will be omitted.

\begin{Prop}
\label{extension}
For the initial value problem \IVP, let $p\in \overline{{\cal Q}}\setminus\{v=V\}$ be such that
$${\cal D}:=\left(J^-(p)\cap J^+(q)\right)\setminus\{p\}\subset {\cal Q}$$
for some point $q\in J^-(p)$. If the radius function is bounded in ${\cal D}$,
 \begin{equation}
 \label{rBounded}
  \sup_{\cal D}r<\infty\;,
 \end{equation}
then $p\in {\cal Q}$. In particular (since ${\cal Q}$ is an open past set), there exists $\epsilon>0$ such that
\begin{equation}
[0,u(p)+\epsilon]\times [0,v(p)+\epsilon]\subset {\cal Q}\;.
\end{equation}
\end{Prop}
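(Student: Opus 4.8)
\textbf{Proof proposal for Proposition~\ref{extension}.}

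The strategy is the standard one for characteristic initial value problems: show that the solution can be continued past $p$ by obtaining uniform $C^0$ bounds on all the unknowns and their relevant derivatives in a neighborhood of $p$ inside $\mathcal{D}$, and then invoking the local existence theory (Theorem~\ref{maximal}) together with the fact that $\mathcal{Q}$ is a maximal past set. First I would note that $\mathcal{D}$, being of the form $\left(J^-(p)\cap J^+(q)\right)\setminus\{p\}$, is contained in a rectangle $[u(q),u(p)]\times[v(q),v(p)]$, so once we know $\sup_{\mathcal{D}} r =: R < \infty$ all the estimates of Section~3 apply with the constant $C_{d,R}$; in particular Lemmas~\ref{signsofquantities}, \ref{bounds} and \ref{boundsR} give uniform bounds on $\varpi$, $1-\mu$, $\partial_r(1-\mu)$, $\uk$, $\nu$, $\zeta$ and $\theta/\lambda$ throughout $\mathcal{D}$. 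Since $r\geq r(0,0)>0$ is bounded below and $r\leq R$ above on $\mathcal{D}$, the only quantity not yet controlled is $\lambda$ itself, for which \eqref{lambdaAbove} gives $\lambda \leq C_d e^{C_{d,R}(u(p)-u(q))} \leq C_{d,R}$ on $\mathcal{D}$ — finite because the $u$-extent of $\mathcal{D}$ is finite. From $\theta = (\theta/\lambda)\lambda$ we then also bound $\theta$, and from $\Omega^2 = -4\uk\lambda$ (via \eqref{bark}) we bound $\Omega$ away from $0$ and $\infty$. Thus every quantity appearing in the first-order system \eqref{EvNuLambda}--\eqref{Kbar} is uniformly bounded on $\mathcal{D}$.

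Next I would upgrade these to the bounds needed to run a continuation argument. Since the right-hand sides of \eqref{EvNuLambda}--\eqref{wavezeta}, \eqref{EvKbar} are smooth functions of $(r,\varpi,\nu,\lambda,\theta,\zeta,\uk)$ that stay in a compact set (with $r$ bounded away from zero), integrating each equation along the characteristic directions shows that each unknown extends continuously to $\overline{\mathcal{D}}$, and in particular has a well-defined continuous limit at $p$. One then restricts the data induced by the solution to the two characteristic segments through a point $q' \in \mathcal{D}$ slightly to the past of $p$ with $J^+(q')\ni p$, checks that this induced data satisfies the regularity hypotheses of Theorem~\ref{maximal} (continuity of $\nu,\lambda,\theta,\zeta$ and $C^1$ regularity of $r,\varpi$, all of which follow from the smoothness of the equations and the established bounds), and applies local existence to produce a classical solution on a full rectangle $[u(q'),u(q')+\delta]\times[v(q'),v(q')+\delta]$ for some $\delta>0$. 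Because this new solution agrees with the old one on the overlap (uniqueness), and because $p$ lies in the interior of this rectangle once $q'$ is chosen close enough to $p$, we conclude $p\in\mathcal{Q}$. Finally, since $\mathcal{Q}$ is open, a neighborhood of $p$ of the stated form $[0,u(p)+\epsilon]\times[0,v(p)+\epsilon]$ is contained in $\mathcal{Q}$ (using that $\mathcal{Q}$ is a past set to fill in the rectangle).

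The only mildly delicate point — and the place I would be most careful — is handling the hyperbolic, rather than ``forward-in-both-variables'', nature of the domain: $\mathcal{D}$ is a causal diamond, not a coordinate rectangle, so one must make sure that the characteristic segments used to feed local existence actually lie in $\mathcal{Q}$ (which they do, since $\mathcal{D}\subset\mathcal{Q}$ and $\mathcal{D}$ contains a neighborhood, in the diamond, of each of its interior characteristic rays), and that the corner of the local-existence rectangle at $q'$ is itself already in $\mathcal{Q}$. This is why the statement is phrased with the auxiliary point $q$ and the set $\mathcal{D}=\left(J^-(p)\cap J^+(q)\right)\setminus\{p\}$: it guarantees that a genuine past neighborhood of the characteristics emanating ``backward'' from $p$ is available inside $\mathcal{Q}$, so the local solve can be anchored there. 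Everything else is routine: the quantitative heart of the argument is entirely contained in the estimates of Lemmas~\ref{signsofquantities}--\ref{boundsR} together with the elementary bound \eqref{lambdaAbove}, which is exactly why the proof can be safely omitted as ``standard''.
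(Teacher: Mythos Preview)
Your proposal is correct and is exactly the ``standard'' argument the paper alludes to when it omits the proof: use the a priori estimates of Lemmas~\ref{signsofquantities}--\ref{boundsR} together with~\eqref{lambdaAbove} (which is finite on $\mathcal D$ since the $u$-extent is bounded) to bound all unknowns uniformly on $\mathcal D$, then feed the induced characteristic data near $p$ into the local existence result of Theorem~\ref{maximal} to extend past $p$.

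One small remark: your caution in the last paragraph is slightly misplaced, since in double-null coordinates the causal diamond $J^-(p)\cap J^+(q)$ \emph{is} the coordinate rectangle $[u(q),u(p)]\times[v(q),v(p)]$, so there is no distinction to worry about; the characteristic segments through any $q'\in\mathcal D$ with $v(q')<v(p)$ and $u(q')<u(p)$ automatically lie in $\mathcal D\subset\mathcal Q$.
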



\section{Unboundedness of the radius function}


Quite remarkably, we are almost ready to establish that, for the initial value problem \IVP, the radius function has to blow up along any null ray of the form $v=v_1\in(0,V]$. To do so we need the following lower bounds.

\begin{lemma}\label{boundsNuLambda}
For the initial value problem \IVP, we have in ${\cal Q}$
\begin{equation}
\label{lambdaLower}
\lambda (u,v)\geq \, \underbar{$\lambda_0$} \, e^{ 2k u } \;,
\end{equation}
where
\begin{equation}
\underbar{$\lambda_0$} \, := \, \inf_{v\in\left[0,V\right]} {\lambda (0,v)} \, > \, 0 \, .
\end{equation}
Moreover, there exists $\alpha>0$ such that, for any $v\in\left[0,V\right]$,
\begin{equation}
\label{boundsf}
\nu(u,v)\geq\alpha v\;.
\end{equation}
\end{lemma}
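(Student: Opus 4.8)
The plan is to prove the two lower bounds in sequence, the first being essentially immediate and the second requiring an integration along the $v$-direction.

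For \eqref{lambdaLower}: I would start from the exact formula for $\lambda$ already displayed in \eqref{lambdaAbove}, namely $\lambda(u,v) = \lambda(0,v)\exp\left(\int_0^u \uk\,\partial_r(1-\mu)\,du'\right)$, which follows from integrating \eqref{EvNuLambda} in $u$. Now I would use the two sign facts from Lemma~\ref{signsofquantities}: $\uk \leq -1$ \eqref{ukLower} and $\partial_r(1-\mu)\leq -2k$ \eqref{redshiftbound}. Multiplying these gives $\uk\,\partial_r(1-\mu) = |\uk|\cdot|\partial_r(1-\mu)| \geq 1\cdot 2k = 2k$, so the exponent is bounded below by $\int_0^u 2k\,du' = 2ku$. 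Since $\lambda(0,v)\geq \underbar{$\lambda_0$}$ by definition of the infimum (and $\underbar{$\lambda_0$}>0$ because $\lambda_0$ is continuous and positive on the compact interval $[0,V]$), this yields $\lambda(u,v)\geq \underbar{$\lambda_0$}\,e^{2ku}$. That completes the first part with no real obstacle.

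For \eqref{boundsf}: the natural approach is to integrate the evolution equation for $\nu$ in the $v$-direction. From \eqref{EvNuLambda}, $\partial_v\nu = \uk\lambda\,\partial_r(1-\mu)$, and on $\mathcal{C}^+$ we have $\nu(u,0)=\nu_0(u)\geq 0$. Hence $\nu(u,v) = \nu(u,0) + \int_0^v \uk\lambda\,\partial_r(1-\mu)\,dv' \geq \int_0^v \uk\lambda\,\partial_r(1-\mu)\,dv'$. The integrand is $|\uk|\cdot\lambda\cdot|\partial_r(1-\mu)| \geq 1\cdot\lambda\cdot 2k \geq 2k\,\underbar{$\lambda_0$}\,e^{2ku}\geq 2k\,\underbar{$\lambda_0$}$, using \eqref{lambdaLower} and $u\geq 0$. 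Therefore $\nu(u,v)\geq 2k\,\underbar{$\lambda_0$}\,v$, so \eqref{boundsf} holds with $\alpha = 2k\,\underbar{$\lambda_0$} > 0$. One should double check that the bound is meant to be uniform in $u$ over $\mathcal{Q}$ (it is, since $\alpha$ does not depend on $u$), and that the estimate only uses quantities already controlled — which it does.

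The only point requiring the slightest care is making sure all the sign inputs ($\uk\leq -1$, $\lambda>0$, $\partial_r(1-\mu)\leq -2k$, $\nu_0\geq 0$) are valid throughout $\mathcal{Q}$, but these are exactly the contents of Lemma~\ref{signsofquantities}, so there is no genuine obstacle here; this lemma is really a short corollary of the sign analysis already done. If one wanted a cleaner statement one could even absorb the $e^{2ku}\geq 1$ step and instead record the stronger bound $\nu(u,v)\geq 2k\,\underbar{$\lambda_0$}\,v\,e^{2ku}$, but the weaker form stated suffices for the later arguments.
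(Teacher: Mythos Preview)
Your proof is correct and follows essentially the same approach as the paper: both parts use the key inequality $\uk\,\partial_r(1-\mu)\geq 2k$ from Lemma~\ref{signsofquantities}, first integrating \eqref{EvNuLambda} in $u$ to get \eqref{lambdaLower}, then integrating in $v$ (using $\nu(u,0)\geq 0$) to get \eqref{boundsf}. The paper in fact records exactly the stronger intermediate bound $\nu(u,v)\geq \nu(u,0)+2k\,\underbar{$\lambda_0$}\,e^{2ku}\,v$ that you mention at the end.
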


\begin{proof}
Note that in view of~\eqref{ukLower} and~\eqref{redshiftbound} we have in $\cal Q$
\begin{equation}
 \uk\, \partial_r (1-\mu)\geq 2k\;.
\end{equation}
It then follows that
\begin{equation}\label{lambda0}
\begin{aligned}
%
 \lambda (u,v) \, & = \, \lambda (0,v) \, \exp \left[ \int_0^u {\underbar{$\kappa$} \, \partial_r (1-\mu) \, (u',v) \, du'} \right] \\
& \geq \, \underbar{$\lambda_0$} \, \exp \left[ \int_0^u {2k\, du'} \right] \\
& = \, \underbar{$\lambda_0$} \, e^{ 2k u } \, .
\end{aligned}
\end{equation}
Consequently,
\begin{equation}
\begin{aligned}
\nu (u,v) \, & = \, \nu (u,0) + \int_0^v {\lambda \, \underbar{$\kappa$} \, \partial_r (1-\mu) \, (u,v') dv'} \\
& \geq \nu(u,0) + \underbar{$\lambda_0$} \, 2k \, e^{ 2k u } \, v \, ,
\end{aligned}
\end{equation}
from which~\eqref{boundsf} follows.

\end{proof}

We are now ready to obtain, in one go, two important characterizations of the behavior of the radius function:  we will show that the radius function blows up along any null line $v=v_1\neq0$, and that all the (spacelike curves) $r=r_1>r_c$ accumulate (in the topology of the plane) at $i^+=(\infty,0)$. More precisely:

\begin{Thm} \label{asymptotic}
Consider the initial value problem \IVP and let $v_1\in\left]0,V\right]$. Then,
\begin{equation}
\label{rUnbound}
 \sup \, \{ r(u,v_1) : (u,v_1) \in {\cal Q} \} = +\infty\;.
\end{equation}
Moreover, for all $r_1>r_c$ there exists $0<V_{r_1}\leq V$ such that the function $u_{r_1}$ (recall~\eqref{u_r})
is defined for all $v\in\left]0,V_{r_1}\right[$ and is such that
\begin{equation}
\label{uInftyV0}
u_{r_1}(v)\rightarrow\infty \;,\text{ as } v\rightarrow 0\;.
\end{equation}
\end{Thm}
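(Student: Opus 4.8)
The plan is to reduce both assertions to the behaviour of the solution along the single ray $v=v_1$, combined with the extension criterion of Proposition~\ref{extension} and the lower bounds of Lemma~\ref{boundsNuLambda}. Set $u^{*}=u^{*}(v_1):=\sup\{u:(u,v_1)\in{\cal Q}\}\in(0,+\infty]$; since ${\cal Q}$ is a past set, $[0,u^{*})\times[0,v_1]\subseteq{\cal Q}$. If $u^{*}=+\infty$ there is nothing to do: by~\eqref{boundsf}, $\partial_u r(u,v_1)=\nu(u,v_1)\geq\alpha v_1>0$ for all $u$, so $r(u,v_1)\geq r_0(v_1)+\alpha v_1 u\to+\infty$, which is~\eqref{rUnbound}. (Integrating~\eqref{lambdaLower} in $v$ gives the sharper $r(u,v_1)\geq r(u,0)+\underbar{$\lambda_0$}\,v_1\,e^{2ku}$.)

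The substantive case is $u^{*}<+\infty$, which I would treat by contradiction: suppose $R:=\sup\{r(u,v_1):(u,v_1)\in{\cal Q}\}<+\infty$. Since $\lambda>0$, $r$ increases in $v$, so $r\leq R$ on $[0,u^{*})\times[0,v_1]$; then Lemma~\ref{boundsR} bounds $|\theta/\lambda|,|\uk|,|\nu|,|\zeta|$ by $C_{d,R}$ there, and~\eqref{lambdaAbove} bounds $\lambda$ by $C_d e^{C_{d,R}u^{*}}$, so every quantity of the first-order system is bounded on that rectangle. The aim is then to contradict the maximality of ${\cal Q}$ by extending past $u=u^{*}$ at height $v=v_1$. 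The one delicate point is that the future boundary of ${\cal Q}$ may meet the line $\{u=u^{*}\}$ at some $\hat v<v_1$ (a ``corner''), so I would first locate it: put $\hat v:=\sup\{v\leq v_1:\{u^{*}\}\times[0,v]\subseteq{\cal Q}\}$, noting $\hat v>0$ because ${\cal C}^{+}$ is complete and $\hat v\leq v_1$ because $(u^{*},v_1)\notin{\cal Q}$. If $\hat v<v_1$, then $\{u^{*}\}\times[0,\hat v)\subseteq{\cal Q}$ while $(u^{*},\hat v)\notin{\cal Q}$ (openness of ${\cal Q}$); but the set $([u^{*}-\delta,u^{*}]\times[0,\hat v])\setminus\{(u^{*},\hat v)\}$ lies in ${\cal Q}$ for small $\delta>0$ (its left part by the past-set property, its right edge by the choice of $\hat v$) and $r\leq R$ on it, so Proposition~\ref{extension} forces $(u^{*},\hat v)\in{\cal Q}$ --- a contradiction. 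Hence $\hat v=v_1$, that is $\{u^{*}\}\times[0,v_1)\subseteq{\cal Q}$, and a final application of Proposition~\ref{extension} at $p=(u^{*},v_1)$ puts $p\in{\cal Q}$, contradicting the definition of $u^{*}$. This proves $R=+\infty$, i.e.~\eqref{rUnbound}. (One may assume $v_1<V$ throughout; when $v_1=V$ the same scheme works, except that at the corner $(u^{*},V)$ --- which Proposition~\ref{extension} excludes --- one extends the solution only in the $u$-direction by solving a characteristic problem with data on $\{u=u^{*}-\delta\}\times[0,V]$ and $\{v=0\}\times[u^{*}-\delta,+\infty)$, all quantities being bounded uniformly in $\delta$; choosing $\delta$ below the resulting uniform existence time again contradicts maximality.)

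For the ``moreover'' part I would argue as follows. Since $\nu_0\geq 0$, $r(u,0)=\tilde r_0(u)$ is non-decreasing with limit $r_c$, hence $r(u,0)\leq r_c$ for all $u$, and in particular $r_0(0)\leq r_c<r_1$; by continuity of $r_0$ there is $V_{r_1}\in(0,V]$ with $r_0(v)<r_1$ on $[0,V_{r_1})$. For fixed $v\in(0,V_{r_1})$, the map $u\mapsto r(u,v)$ is continuous and strictly increasing (by~\eqref{boundsf}, $\nu\geq\alpha v>0$), equals $r_0(v)<r_1$ at $u=0$, and has supremum $+\infty$ by~\eqref{rUnbound}; so it attains the value $r_1$ at a unique point, which is $u_{r_1}(v)$, and $u_{r_1}$ is $C^{1}$ as in~\eqref{u_r}. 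Finally, to get~\eqref{uInftyV0}, fix $U>0$: since $(U,0)\in{\cal C}^{+}\subseteq{\cal Q}$ and ${\cal Q}$ is open, $(U,v)\in{\cal Q}$ for small $v$, and by continuity $r(U,v)\to r(U,0)=\tilde r_0(U)\leq r_c<r_1$ as $v\to 0^{+}$; hence $r(U,v)<r_1$, and therefore $u_{r_1}(v)>U$, for all sufficiently small $v$. Since $U$ is arbitrary, $u_{r_1}(v)\to+\infty$ as $v\to 0$.

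The hard part is the case $u^{*}(v_1)<+\infty$: the quantitative estimates needed there are already available from Lemma~\ref{boundsR} and~\eqref{lambdaAbove}, so the real difficulty is the soft but fiddly bookkeeping needed to describe the future boundary of ${\cal Q}$ near the end of the ray and to invoke the extension criterion correctly, in particular dealing with a possible corner and with the endpoint $v_1=V$.
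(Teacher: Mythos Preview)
Your proof is correct and follows the same approach as the paper: both combine the linear lower bound $r(u,v_1)\geq r(0,v_1)+\alpha v_1 u$ from Lemma~\ref{boundsNuLambda} with the extension criterion (Proposition~\ref{extension}) to conclude~\eqref{rUnbound}, and both derive the ``moreover'' part from this and the monotonicity of $r$. The paper's own proof is extremely terse --- essentially one line invoking the extension criterion and one line saying the rest ``follows'' --- whereas you have carefully spelled out the details the paper suppresses: the corner argument locating $\hat v$ to ensure the hypothesis $\mathcal{D}\subset\mathcal{Q}$ of Proposition~\ref{extension} is actually met, the separate treatment of the boundary case $v_1=V$ (which Proposition~\ref{extension} formally excludes), and the explicit verification that $u_{r_1}$ is well defined and tends to $+\infty$ as $v\to 0$.
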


\begin{proof}
From~\eqref{boundsf} we have
\begin{equation}
\begin{aligned}
r(u,v_1) \, &= \, r(0,v_1) + \int_0^u {\nu (u',v_1) \, du'} \\
& \geq \, r(0,v_1) + \alpha v_1u\;,
\end{aligned}
\end{equation}
which clearly diverges as $u\rightarrow\infty$, provided that $v_1\neq0$. The first result~\eqref{rUnbound} is now a consequence of the extension criterion (Proposition~\ref{extension}), and then the remaining conclusions follow.
\end{proof}

We will now show that, in fact, $r$ blows up in finite $u$-``time''.
\begin{Thm}\label{blowup}
Let ${\cal Q}$ be the domain of the maximal solution of the characteristic initial value problem with initial data satisfying \IVP.
Then:
\begin{enumerate}
 \item  For every $v_1\in\left(0,V\right]$ there exists $u^*(v_1)>0$ such that
\begin{equation}
 \lim_{u\rightarrow u^*(v_1)}r(u,v_1)=+\infty\;.
\end{equation}
\item For every $r_1>\rz$ there exists $0<V_{r_1}\leq V$ such that the function $u_{r_1}$
is defined for all $v\in(0,V_{r_1}]$. Moreover, $V_{r_1}=V$ and $u_{r_1}(V)<u^*(V)$ or $u_{r_1}(V_{r_1})=0$, i.e., the curve $r=r_1$ leaves $\cal Q$ to the right through $\left(\{u=0\}\cup\{v=V\}\right)\cap{\cal Q}$.
\item For any $r_1>r(0,V)$, $V_{r_1}=V$ and
\begin{equation}
{\cal Q}\cap\{r\geq r_1\}\setminus\{v=0\}=\{(u,v)\,:\, 0< v\leq V \text{ and } u_{r_1}(v)\leq u< u^*(v)\}\;.
\end{equation}
\item
\begin{equation}
\lim_{(u,v)\rightarrow (u^*(v_1),v_1)}r(u,v)=\infty\;.
\end{equation}
\end{enumerate}
\end{Thm}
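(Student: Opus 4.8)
\textbf{Proof strategy for Theorem~\ref{blowup}.}

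The plan is to upgrade the divergence of $r$ along each null ray $v=v_1$ (Theorem~\ref{asymptotic}) to divergence in finite $u$-time, and then to use this, together with the extension criterion (Proposition~\ref{extension}), to extract the global structure of $\{r\geq r_1\}$. For part~1, the idea is to revisit the evolution equation~\eqref{EvNuLambda}, $\partial_u\nu=\uk\lambda\,\partial_r(1-\mu)$, and to show that $\nu$ actually grows fast enough in $u$ to force $r$ to infinity at a finite parameter value. From Lemma~\ref{bounds} we have, for large $r$, $\partial_r(1-\mu)\leq \frac{2\varpi(0,0)}{r^2}-\frac{2e^2}{r^3}-\frac{2\Lambda}{3}r$, so the dominant term is $-\frac{2\Lambda}{3}r$; combining with $\uk\leq-1$ and $\lambda>0$ gives $\partial_u\nu\geq \lambda\cdot\frac{\Lambda}{3}r$ once $r$ is large enough (the exact threshold is harmless since $r\to\infty$ along the ray by Theorem~\ref{asymptotic}). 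Since $\partial_u r=\nu$ and $\partial_u r=\lambda\,(\text{something})$ is not directly available along a ray, I would instead close a differential inequality purely in the $u$ variable: along $v=v_1$, using $\lambda\geq\underline{\lambda_0}e^{2ku}\geq\underline{\lambda_0}$ from Lemma~\ref{boundsNuLambda} and also the cruder observation that once $\nu$ is large $r$ itself grows, one obtains an inequality of Riccati type, roughly $\frac{d}{du}\nu \gtrsim r$ and $\frac{d}{du}r=\nu$, hence $\frac{d}{du}\left(\nu^2\right)\gtrsim r\nu = r\,\frac{dr}{du}$, so $\nu^2 \gtrsim r^2 + \text{const}$ and therefore $\frac{dr}{du}\gtrsim r$; but this only gives exponential growth, not finite-time blow-up. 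To get finite-time blow-up one must do better: keeping the quadratic-in-$r$ term, $\frac{d}{du}\nu\gtrsim \lambda r \gtrsim r$ together with $\frac{dr}{du}=\nu$ yields $\frac{d}{du}(\nu^2)\gtrsim r^2\,\frac{dr}{du}/\!\!\;$... the cleaner route is: from $\partial_u\nu\geq c\,\lambda\, r$ and $\lambda^{-1}$ bounded below on compact $u$-intervals, and $\partial_u r=\nu$, deduce $\partial_u\nu\geq c'\, r\,\partial_u r \cdot (\text{bounded})$ is the wrong pairing — so instead integrate $\partial_u(\nu) \geq c\, r$ (dropping $\lambda\geq\underline{\lambda_0}$) and bootstrap: $\nu(u)\geq \nu(u_0)+c\int_{u_0}^u r \geq \nu(u_0)+c\,(u-u_0)\,r(u_0)$ is linear, feed back into $r$, get quadratic, feed back, get cubic, \dots, and the resulting power series has finite radius of convergence, giving $u^*(v_1)<\infty$.

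Let me state the finite-time mechanism more honestly, since this is the heart of part~1. Fix $v_1\in(0,V]$ and let $u_0$ be large enough that $r(u,v_1)$ exceeds the threshold beyond which $\partial_r(1-\mu)\leq -\frac{\Lambda}{4}r$ (possible by Theorem~\ref{asymptotic}), and also $\nu(u_0,v_1)>0$. On $[u_0,u^*(v_1))$ set $f(u)=r(u,v_1)$; then $f'=\nu$, and
\begin{equation}
\nu'=\partial_u\nu=\uk\lambda\,\partial_r(1-\mu)\geq \lambda\cdot\frac{\Lambda}{4}\,f\geq \frac{\Lambda}{4}\,\underline{\lambda_0}\,f\;,
\end{equation}
using $\uk\leq-1$, $\lambda\geq\underline{\lambda_0}$. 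Hence $f''=\nu'\geq c\,f$ with $c=\frac{\Lambda}{4}\underline{\lambda_0}>0$, and $f'(u_0)>0$, $f(u_0)>0$. But $f''\geq cf$ alone gives only exponential growth. The extra ingredient is the $e^{2ku}$ factor in $\lambda$: $\lambda\geq\underline{\lambda_0}e^{2ku}$, so $f''\geq c\,e^{2ku}f$, which still integrates to something finite-time-regular. So a genuinely different argument is needed — namely exploiting that $\nu$ and $\lambda$ \emph{both} grow, i.e.\ using equation~\eqref{EvNuLambda} also as an equation for $\partial_v\nu$ coupled with $\partial_u\lambda$, or — more likely what the authors do — using the second Raychaudhuri-type structure and the fact that $\frac{\nu}{\lambda}=\frac{\uk(1-\mu)}{\lambda}$ combined with $(1-\mu)\sim-\frac{\Lambda}{3}r^2$ for large $r$ forces $\frac{d r}{du}=\nu = \uk(1-\mu)\gtrsim r^2$, which \emph{does} blow up in finite time: this is the real point — for large $r$, $1-\mu \leq 1-\frac{2\varpi(0,0)}{r}+\frac{e^2}{r^2}-\frac{\Lambda}{3}r^2 \leq -\frac{\Lambda}{6}r^2$ (say), so $\nu=\uk(1-\mu)\geq -(1-\mu)\geq \frac{\Lambda}{6}r^2$ since $\uk\leq-1$ and $(1-\mu)\leq 0$. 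Thus along $v=v_1$, once $r$ is large, $\frac{dr}{du}\geq\frac{\Lambda}{6}r^2$, which integrates to $r(u,v_1)\geq \frac{1}{\frac{1}{r(u_0,v_1)}-\frac{\Lambda}{6}(u-u_0)}$, blowing up at $u^*(v_1)\leq u_0+\frac{6}{\Lambda r(u_0,v_1)}<\infty$. That is the clean argument, and it is where I expect the main subtlety to lie: one must first invoke Theorem~\ref{asymptotic} to know $r$ eventually gets large on the ray, then the algebraic bound $\nu\geq -(1-\mu)\geq \frac{\Lambda}{6}r^2$ (valid for $r$ large, from Lemma~\ref{bounds}) closes the Riccati inequality. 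Finiteness of $u^*(v_1)$, i.e.\ $u^*(v_1)<\infty$, together with the fact that for $u<u^*(v_1)$ the point $(u,v_1)$ lies in $\cal Q$ (again Theorem~\ref{asymptotic} plus Proposition~\ref{extension}, since $r$ is bounded on $[0,u]\times[0,v_1]$), gives part~1.

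\textbf{Parts 2--4.} For part~2, fix $r_1>r(0,0)$. By Lemma~\ref{signsofquantities} the curve $\{r=r_1\}$ is spacelike in ${\cal Q}\setminus\{v=0\}$ and, where defined, is a $C^1$ graph $v\mapsto (u_{r_1}(v),v)$ with $u_{r_1}'(v)<0$ (since $\partial_u r=\nu>0$ and $\partial_v r=\lambda>0$ force $u$ to decrease as $v$ increases along a level set). Theorem~\ref{asymptotic} gives $0<V_{r_1}\leq V$ with $u_{r_1}$ defined on $(0,V_{r_1})$ and $u_{r_1}(v)\to\infty$ as $v\to 0$. It remains to analyze the other end $v\to V_{r_1}$: the curve must exit $\cal Q$, and by the structure of the maximal past set $\cal Q$ (open, past set, with the extension criterion) and the monotonicity $u_{r_1}'<0$, the only ways out are through $\{u=0\}$ (i.e.\ $u_{r_1}(V_{r_1})=0$) or through $\{v=V\}$ at a $u$-value strictly below $u^*(V)$ (since $r<\infty$ there while $r\to\infty$ as $u\to u^*(V)$ along $v=V$ by part~1, and $\{r\leq R\}$ cannot reach the blow-up corner by Proposition~\ref{extension}); this is exactly the stated dichotomy. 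For part~3, take $r_1>r(0,V)$; then along $v=V$ the level $r=r_1$ is attained at some $u_{r_1}(V)\in(0,u^*(V))$ because $r(0,V)<r_1$ and $r(u,V)\to\infty$, and $\nu>0$ makes the attainment unique, so $V_{r_1}=V$; the displayed set equality then follows by combining $\partial_u r>0$ (so $\{r\geq r_1\}$ at fixed $v$ is a half-line $u\geq u_{r_1}(v)$) with the fact that $u^*(v)$ is the right endpoint of the $u$-range of $\cal Q$ at height $v$, which in turn needs $u^*$ to be the sup of the $u$-fiber — a point established by part~1 together with the blow-up-is-the-only-obstruction content of Proposition~\ref{extension}. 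Finally part~4 strengthens the one-variable limit of part~1 to the two-variable limit $r(u,v)\to\infty$ as $(u,v)\to(u^*(v_1),v_1)$: using $\partial_v r=\lambda>0$ and $\partial_u r=\nu>0$, $r$ is monotone increasing in each variable, so for $(u,v)$ near $(u^*(v_1),v_1)$ with $v\leq v_1$ one has $r(u,v)\geq r(u,v)\geq r(u, v')$ ... more carefully: given $N$, pick $u_1<u^*(v_1)$ with $r(u_1,v_1)>N$; by continuity $r(u_1,v)>N$ for $v$ in a neighborhood of $v_1$; then for $u\in[u_1,u^*(v_1))$ and such $v$, monotonicity in $u$ gives $r(u,v)\geq r(u_1,v)>N$; this handles $v\leq v_1$ directly and $v\geq v_1$ by monotonicity in $v$ (when $(u,v)\in\cal Q$), yielding the claim. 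I expect parts~2--4 to be essentially bookkeeping once part~1 is in hand; the one place to be careful is justifying that $u^*(v)$, defined in part~1 as the blow-up parameter, coincides with the right endpoint of the $u$-fiber of $\cal Q$ — this is where Proposition~\ref{extension} (bounded $r$ $\Rightarrow$ extendible) is essential and must be quoted explicitly.
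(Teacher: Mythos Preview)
Your ``clean argument'' for part~1 contains a genuine error: you claim that for large $r$,
\[
1-\mu \;\leq\; 1-\frac{2\varpi(0,0)}{r}+\frac{e^2}{r^2}-\frac{\Lambda}{3}r^2 \;\leq\; -\frac{\Lambda}{6}r^2\,,
\]
but the first inequality is in the \emph{wrong direction}. Lemma~\ref{bounds}~\eqref{boundsc} gives that expression as a \emph{lower} bound for $1-\mu$, coming from $\varpi\leq\varpi(0,0)$; the only upper bound available at this stage is $1-\mu\leq 0$. And you cannot repair this with a lower bound on $\varpi$: the only one established so far is~\eqref{boundsb}, $\varpi\geq -\tfrac{\Lambda}{6}r^3+\dots$, which is precisely equivalent to $1-\mu\leq 0$ and allows $-(1-\mu)$ to be $o(r^2)$. (The bound $|\varpi|\lesssim r$ in~\eqref{massInft} is proved only later, and depends on Theorem~\ref{blowup}.) So the route $\nu=\uk(1-\mu)\geq -(1-\mu)\gtrsim r^2$ is unavailable.

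The paper obtains $\nu\gtrsim r^2$ by a different mechanism, using $\partial_r(1-\mu)$ rather than $(1-\mu)$. From~\eqref{redshiftbound} and~\eqref{boundse} one has $-\partial_r(1-\mu)\geq 2\bar A\,r$ throughout~${\cal Q}$ for some $\bar A>0$. Integrating the $v$-equation $\partial_v\nu=\uk\lambda\,\partial_r(1-\mu)\geq 2\bar A\,\lambda r=\bar A\,\partial_v(r^2)$ from the horizon $v=0$ then yields
\[
\nu(u,v)\;\geq\;\nu(u,0)+\bar A\bigl(r^2(u,v)-r^2(u,0)\bigr)\;\geq\;\bar A\Bigl(1-\tfrac{r_c^2}{r_1^2}\Bigr)r^2(u,v)
\]
once $r(u,v)\geq r_1>r_c$. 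From here your Riccati integration is correct and gives finite $u^*(v_1)$. Note that the $v$-integration is what saves the day: it trades the unmanageable quantity $(1-\mu)$ for the well-controlled $\partial_r(1-\mu)$, and the $r$-factor in $\partial_r(1-\mu)\lesssim -r$ combines with $\lambda=\partial_v r$ to produce the exact derivative $\partial_v(r^2)$.

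Your discussion of parts~2--4 is essentially in line with the paper (which treats them as immediate consequences once part~1 and the monotonicity $\nu>0$, $\lambda>0$ are in hand); the only point worth sharpening is your part~2, where the paper argues by contradiction: if the level curve $r=r_1$ accumulated at $(u^*(V_{r_1}),V_{r_1})$ rather than exiting through $\{u=0\}\cup\{v=V\}$, then one picks $r_2>r_1$ attained at some $(u_2,V_{r_1})\in{\cal Q}$ with $u_2<u^*(V_{r_1})$, and continuity forces $u_{r_2}(V_{r_1}-\epsilon)<u_{r_1}(V_{r_1}-\epsilon)$ for small $\epsilon$, contradicting $\nu>0$.
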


\begin{proof}

From~\eqref{redshiftbound} and~\eqref{boundse} we can guarantee the existence of $\bar{A}>0$ such that
\begin{equation}
-\partial_r (1-\mu)  \geq  2\bar{A} \, r \, .
\end{equation}

 Therefore, recalling~\eqref{ukLower} and that $\nu(u,0) \geq 0$, the equation for $\nu$ yields
\begin{equation}
\begin{aligned}
\nu(u,v) \, & = \, \nu(u,0) \, + \, \int_0^v {\lambda \, \underbar{$\kappa$} \, \partial_r (1-\mu) \, (u,v') \, dv'} \\
& \geq \, 2\bar{A} \int_0^v {\lambda \, r \, (u,v') \, dv'} \\
& = \, \bar{A} \left[r^2 (u,v')\right]_{v'=0}^{v'=v} \\
& = \, \bar{A} \, r^2 (u,v) \, \left[ 1 - \frac{r^2 (u,0)}{r^2 (u,v)} \right] \, .
\end{aligned}
\end{equation}

So, for $r_c \, < \, r_1 \, \leq \, r(u,v)$, we have
\begin{equation}\label{lowerboundnu}
\nu(u,v)  \geq \, A \, r^2 (u,v) \, ,
\end{equation}
with
\begin{equation}
 A:=\bar{A} \left[ 1 - \frac{r_c^2}{{r_1}^2} \right] \;.
\end{equation}

In view of Theorem~\ref{asymptotic}, if $r_1$ is sufficiently large then for each $v\in\left]0,V\right]$ there exists $u_{r_1}(v)\geq0$ such that $r(u_{r_1}(v),v)=r_1$. Consequently,
\begin{equation}
\begin{aligned}
 &\int_{u_{r_1}(v)}^u \frac{\partial_{u}r(u',v)}{r^2(u',v)} du' \,  \geq \, A \, \left( u-u_{r_1}(v) \right)\\
&\Leftrightarrow \, \frac{1}{r_1} - \frac{1}{r(u,v)} \,  \geq \, A \, \left( u-u_{r_1}(v) \right) \\
&\Leftrightarrow \, r(u,v) \,  \geq \, \frac{1}{\frac{1}{r_1} - A \, \left( u-u_{r_1}(v) \right)} \, ,
\end{aligned}
\end{equation}
which tends to $+\infty$ when $u \to \left( u_{r_1}(v) + \frac{1}{Ar_1} \right)^-$. This establishes point 1.

Let $V_{r_1}$ be the supremum of the set of points in $\left]0,V\right]$ for which $u_{r_1}$ is well-defined in $\left]0,V_{r_1}\right[$.  If point 2 were not true then, in view of the spacelike character of $r=r_1$, we would have
$$\lim_{v\rightarrow V_{r_1}}u_{r_1}(v)=u^*(V_{r_1})\;.$$
Since $r$ is unbounded along $v=V_{r_1}$, there must exist $r_2>r_1$ and $0<u_2<u^*(V_{r_1})$ such that $r(u_2,V_{r_1})=r_2$. But since $(u_2,V_{r_1})\in\cal Q$, there exists $\epsilon>0$ such that $r(u_{r_2}(V_{r_1}-\epsilon), V_{r_1}-\epsilon)=r_2$ and $u_{r_2}(V_{r_1}-\epsilon)<  u_{r_1}(V_{r_1}-\epsilon)$, which contradicts the fact that $\nu>0$ in ${\cal Q}\setminus\{v=0\}$. This establishes point 2. Points 3 and 4 are then an immediate consequence.
\end{proof}


\section{Behavior of the scalar field for large $r$}
\label{sectionScalarAss}


It is possible to control the geometry of the curves of constant $r$ even further. To do this, we need precise knowledge about the behavior of the scalar field for large $r$, which is, of course, interesting in itself.

Note that, for $ r_1>r_c$, the function
\begin{equation}
 (u,v)\mapsto (r(u,v),v) \;,
\end{equation}
defines a diffeomorphism, mapping ${\cal Q}\cap\{r\geq r_1\}$ into $[r_1,\infty)\times(0,V]$, with inverse
\begin{equation}
 (r,v)\mapsto (u_r(v),v)\;.
\end{equation}
For any function $h=h(u,v)$ we define
\begin{equation}
 \tilde h(r,v): = h(u_r(v),v)\;.
\end{equation}
Now define the function $f:[r_1,\infty)\rightarrow\mathbb{R}$ by
\begin{equation}
\label{deff}
f(r):= \sup_{v\in(0,V]}\max\left\{ \left| {\frac{\tilde\theta}{\tilde\lambda}}(r,v) \right| , \left| {\frac{\tilde\zeta}{\tilde\nu}}(r,v) \right| \right\}\;;
\end{equation}
that this is a well-defined function follows from~\eqref{boundInR},~\eqref{lowerboundnu} and Theorem~\ref{blowup}.

\begin{Thm}
\label{decayScalar}
There exists $r_1>r_c$ such that, for all $r_2,r$ with $r_1\leq r_2\leq r$ we have, for the function defined by~\eqref{deff},
\begin{equation}
f(r)\leq f(r_2)\frac{r_2}{r}\;.
\end{equation}
\end{Thm}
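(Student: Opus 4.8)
The strategy is to derive a differential inequality for $f$ along the radial direction $r$, and then integrate it. The natural approach is to fix a point $(u_r(v),v)$ on the curve $r=\text{const}$, and to control how $|\tilde\theta/\tilde\lambda|$ and $|\tilde\zeta/\tilde\nu|$ change as $r$ increases, i.e. as we move along $v=\text{const}$ in the direction of increasing $u$ (recall $\nu>0$, so $r$ increases with $u$). The key evolution equations are~\eqref{thetaoverlambda} for $\theta/\lambda$ and an analogous one for $\zeta/\nu$, which one obtains by combining~\eqref{wavezeta} with~\eqref{EvNuLambda}:
\begin{equation}
\partial_v\left(\frac{\zeta}{\nu}\right) = -\frac{\theta}{r} - \frac{\zeta}{\nu}\,\underbar{$\kappa$}\,\partial_r(1-\mu)\cdot\frac{\lambda}{\nu}\cdot(\cdots)\;,
\end{equation}
but more usefully one should differentiate $\zeta/\nu$ in the $u$-direction, or directly work with $\partial_u$ of the relevant ratios. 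The cleanest route: using $\partial_ur=\nu$, rewrite $\partial_u$ of $\tilde\theta/\tilde\lambda$ along $v=\text{const}$ as $\nu\,\partial_r(\tilde\theta/\tilde\lambda)$, so that from~\eqref{thetaoverlambda},
\begin{equation}
\nu\,\partial_r\!\left(\frac{\tilde\theta}{\tilde\lambda}\right) = -\frac{\zeta}{r} - \frac{\theta}{\lambda}\,\underbar{$\kappa$}\,\partial_r(1-\mu)\;.
\end{equation}
Dividing by $\nu$ and using $\nu\geq A r^2$ from~\eqref{lowerboundnu} (valid for $r>r_1$ large) together with $\partial_r(1-\mu)\leq -2k$ and $-\underbar{$\kappa$}\geq 1$, the ``bad'' sign term $-\frac{\theta}{\lambda}\underbar{$\kappa$}\partial_r(1-\mu)/\nu$ actually has a sign that damps $|\tilde\theta/\tilde\lambda|$ when it is positive — one must check signs carefully — while the source term $-\zeta/(r\nu)$ is bounded by $\frac{|\zeta|}{r}\cdot\frac{1}{Ar^2}\lesssim \frac{|\tilde\zeta/\tilde\nu|\,\nu}{r\cdot Ar^2}$... this needs the bound $|\zeta|\leq |\tilde\zeta/\tilde\nu|\cdot\nu$ and then $\nu$ cancels. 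The upshot should be an estimate of the schematic form
\begin{equation}
\left|\partial_r\!\left(\frac{\tilde\theta}{\tilde\lambda}\right)\right| \lesssim \frac{1}{r}\left|\frac{\tilde\zeta}{\tilde\nu}\right| + (\text{something})\left|\frac{\tilde\theta}{\tilde\lambda}\right|\;,
\end{equation}
and symmetrically with $\theta\leftrightarrow\zeta$, $\lambda\leftrightarrow\nu$, so that $f$ (the sup of both) obeys $|f'(r)|\lesssim f(r)/r$. Grönwall then gives $f(r)\leq f(r_2)(r/r_2)^{C}$, but to get the \emph{sharp} exponent $C=-1$ one must track constants: the coefficient of $1/r$ in front of $f$ must be exactly $1$ (or less) after all cancellations, which is where choosing $r_1$ large is essential — the terms involving $\varpi(0,0)$, $e^2$, $\Lambda$ in~\eqref{boundse} all decay like $r$ or $1/r$ relative to the leading $-\Lambda r$ behavior of $\partial_r(1-\mu)$, and $\underbar{$\kappa$}\partial_r(1-\mu)/\nu \lesssim \Lambda r/(Ar^2) = O(1/r)$, which is what produces the clean $f'/f \leq -1/r + O(1/r^2)$ upon integrating.

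More precisely, the main computation I would carry out: from the two evolution equations, establish
\begin{equation}
\partial_r\!\left(\frac{\tilde\theta}{\tilde\lambda}\right) = -\frac{\tilde\zeta}{\tilde\nu}\cdot\frac{\tilde\nu}{r\tilde\nu} - \frac{\tilde\theta}{\tilde\lambda}\cdot\frac{\underbar{$\kappa$}\,\partial_r(1-\mu)}{\nu}
\end{equation}
— wait, the source term is $-\zeta/(r\nu) = -(\tilde\zeta/\tilde\nu)/r$, which is exactly $-f$-controlled with coefficient $1/r$; and the homogeneous coefficient $-\underbar{$\kappa$}\partial_r(1-\mu)/\nu$ is, by~\eqref{lowerboundnu} and the explicit form $\partial_r(1-\mu)=2\varpi/r^2-2e^2/r^3-2\Lambda r/3$, bounded in absolute value by $\frac{|\underbar{$\kappa$}|\cdot|\partial_r(1-\mu)|}{Ar^2}$. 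Since $|\underbar{$\kappa$}|$ grows at most like $r^{C_{d,R}}$ — this is a problem! The bound~\eqref{kbarr0} only holds in $\{r\leq R\}$. So the \emph{main obstacle} is controlling $\underbar{$\kappa$}$ (or rather $\underbar{$\kappa$}\,\partial_r(1-\mu)$) for large $r$. I would resolve this by noting that $\underbar{$\kappa$}\,\partial_r(1-\mu) = \partial_v\nu/\lambda$ by~\eqref{EvNuLambda}, and $\lambda = \underbar{$\kappa$}(1-\mu)$, so $\underbar{$\kappa$}\partial_r(1-\mu)/\nu = \partial_v\nu/(\lambda\nu)$; alternatively, one has the cleaner identity $\underbar{$\kappa$}\,\partial_r(1-\mu) = \partial_v\ln|\nu|$ along... no. Better: use that $\frac{\nu}{\lambda} = \frac{\underbar{$\kappa$}(1-\mu)}{\lambda}$ and $1-\mu = -4\Omega^{-2}\lambda\nu$, so $\frac{\nu}{\lambda}=\frac{\underbar{$\kappa$}\cdot(-4\Omega^{-2}\lambda\nu)}{\lambda}$, giving $1 = -4\underbar{$\kappa$}\Omega^{-2}\nu$, consistent with~\eqref{bark}. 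The honest fix: observe that $\underbar{$\kappa$}\partial_r(1-\mu) \cdot \frac{1}{\nu} = \frac{\underbar{$\kappa$}\partial_r(1-\mu)}{\underbar{$\kappa$}(1-\mu)} = \frac{\partial_r(1-\mu)}{1-\mu}$, and since $1-\mu \sim -\frac{\Lambda}{3}r^2$ and $\partial_r(1-\mu)\sim -\frac{2\Lambda}{3}r$ for large $r$ (from~\eqref{boundsb}--\eqref{boundse}, which pin down $1-\mu$ and $\partial_r(1-\mu)$ to leading order in $r$ \emph{uniformly}), we get $\frac{\partial_r(1-\mu)}{1-\mu} = \frac{2}{r} + O(r^{-3})$. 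This is the crucial clean identity: the homogeneous coefficient equals $\partial_r(1-\mu)/(1-\mu)$, which is \emph{explicitly} $\sim 2/r$.

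So the final differential inequality is, for $r\geq r_1$ with $r_1$ chosen large,
\begin{equation}
\left|\partial_r\!\left(\frac{\tilde\theta}{\tilde\lambda}\right)\right| \leq \frac{1}{r}\left|\frac{\tilde\zeta}{\tilde\nu}\right| + \left(\frac{2}{r}+\frac{C}{r^3}\right)\left|\frac{\tilde\theta}{\tilde\lambda}\right|\;,
\end{equation}
which at first glance gives exponent $3$, not $1$. Hmm — so I would need to be cleverer: the $2/r$ homogeneous term has a \emph{definite sign} relative to $\tilde\theta/\tilde\lambda$. Indeed $\partial_r(1-\mu)/(1-\mu) > 0$ (both negative), so the term $-\frac{\tilde\theta}{\tilde\lambda}\cdot\frac{\partial_r(1-\mu)}{1-\mu}$ \emph{damps} $|\tilde\theta/\tilde\lambda|$. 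Therefore $\partial_r|\tilde\theta/\tilde\lambda| \leq \frac{1}{r}|\tilde\zeta/\tilde\nu| - \frac{2}{r}|\tilde\theta/\tilde\lambda| + \frac{C}{r^3}|\tilde\theta/\tilde\lambda|$ (in the sense of upper Dini derivatives), and symmetrically for $|\tilde\zeta/\tilde\nu|$ (one should check the source term there is also $\leq \frac{1}{r}|\tilde\theta/\tilde\lambda|$ up to the sign structure; the analogous computation from~\eqref{wavezeta} and $\partial_vr=\lambda$ needs $\partial_u$-integration, requiring a parallel argument with roles of $u,v$ swapped, which is where I'd be most careful). Adding, $f' \leq \frac{1}{r}f - \frac{2}{r}f + \frac{C}{r^3}f = -\frac{1}{r}f + \frac{C}{r^3}f$, so $(rf)' \leq \frac{C}{r^2}(rf)$, giving $r f(r) \leq r_2 f(r_2)\exp(C/r_2) $; absorbing the harmless factor $\exp(C/r_2)\to 1$ by enlarging $r_1$ (or more carefully, the constant $C$ in~\eqref{boundse} is fixed, so one gets $f(r)\leq f(r_2)\frac{r_2}{r}e^{C(1/r_2-1/r)}$ and then picks $r_1$ so that... actually to get \emph{exactly} $f(r)\leq f(r_2)r_2/r$ one needs the $O(r^{-3})$ error to have a sign, or to accept it into the implicit constant — I'd re-examine whether~\eqref{boundse} gives a one-sided bound making the correction term negative). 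In any case, the essential point — and the main obstacle — is (a) deriving the two transport equations for the ratios with correct signs, and (b) recognizing that the homogeneous coefficient is the \emph{explicit} logarithmic derivative $\partial_r(1-\mu)/(1-\mu) \geq 2/r + O(r^{-3})$, which beats the source term's $1/r$ and yields the sharp decay $f\sim 1/r$ rather than a generic Grönwall power.
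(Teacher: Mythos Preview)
Your approach identifies the same key computation as the paper: the identity $\uk\,\partial_r(1-\mu)/\nu = \partial_r(1-\mu)/(1-\mu)$ together with the sharp lower bound $\partial_r(1-\mu)/(1-\mu) \geq 2/r$ for large $r$ (this is exactly the paper's~\eqref{frac1minusmu}), and you correctly recognize that the damping coefficient $2/r$ beats the source coefficient $1/r$ to give net decay $f'\leq -f/r$. Your worry about an $O(r^{-3})$ error term is also unfounded: the bound $\geq 2/r$ is a clean one-sided inequality (the correction term in~\eqref{frac1minusmu} has the favorable sign), so nothing spoils the exact estimate.

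There is, however, a genuine gap at the step you call ``Adding''. The transport equation for $\theta/\lambda$ is a $\partial_u$-equation (hence $\partial_r$ at fixed $v$), while that for $\zeta/\nu$ is a $\partial_v$-equation (hence $\partial_r$ at fixed $u$). These are derivatives along \emph{different} null directions, and $f$ is a supremum over the whole level set $\{r=\text{const}\}$, so one cannot simply combine the two pointwise inequalities into one for $f$. The paper sidesteps this by working in integral form: it integrates each ratio along its own characteristic from $r_2$ to $r$, using the integrating-factor bound $\exp[-\int\partial_r(1-\mu)/(1-\mu)\,dr]\leq (r_2/r)^2$ and the crude bound $f\leq C_2$ on $[r_2,r]$ in the source, to obtain the bootstrap $f(r)\leq\frac{C_2}{2}(1+r_2^2/r^2)$. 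From this it first extracts monotonicity of $f$ (by contradiction), then sets $C_2=f(r_2)$, differentiates at $r=r_2$ to recover exactly your inequality $f'(r_2)\leq -f(r_2)/r_2$, and integrates. Your more direct route can be made rigorous --- for instance by bounding the upper Dini derivative of $f$ via tracing back from a near-maximizer on $\{r+h\}$ along whichever characteristic is appropriate and using monotonicity of the resulting affine bound --- but this argument needs to be written out, and the paper's integral bootstrap is the cleaner way to package it.
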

\begin{proof}
We will start by establishing the following
\begin{lemma} \label{bootF}
There exists $r_1>r_c$ such that, given $r_2>r_1$, we have for any $r \geq r_2$
\begin{equation}
f(r)\leq C_2 \Rightarrow f(r) \leq \frac{C_2}{2}\left (1+\frac{{r_2}^2}{r^2}\right)\;.
\end{equation}
\end{lemma}
\begin{proof}
Integrating~\eqref{thetaoverlambda} from $r=r_2$ towards the future gives
\begin{equation}
\begin{aligned}
\frac{\theta}{\lambda} (u,v) =  &\frac{\theta}{\lambda}(u_{r_2}(v),v) e^{ - \int_{u_{r_2}(v)}^u \frac{\partial_{r}(1-\mu)}{1-\mu}\nu (u',v) du'}
\\
& - \int_{u_{r_2}(v)}^u { \frac{\zeta}{r}(u',v) e^{ - \int_{u'}^u \frac{\partial_{r}(1-\mu)}{1-\mu}\nu (u'',v) \, du''} \, du'} \;.
\end{aligned}
\end{equation}
To estimate the exponential we note that from Lemma~\ref{bounds} we have

\begin{equation}\label{frac1minusmu}
\begin{aligned}
\frac{\partial_r (1-\mu)}{1-\mu} \, &= \, \frac{-\partial_r (1-\mu)}{-(1-\mu)} \\
& \geq \, \frac{ \frac{2\Lambda}{3} r - \frac{2\varpi(0,0)}{r^2} + \frac{2e^2}{r^3} }{ \frac{\Lambda}{3} r^2 - 1 + \frac{2\varpi(0,0)}{r} - \frac{e^2}{r^2} } \\
& = \, \frac{2}{r} \, \frac{ 1 - \frac{3\varpi(0,0)}{\Lambda r^3} + \frac{3e^2}{\Lambda r^4} }{ 1 - \frac{3}{\Lambda r^2} + \frac{6\varpi(0,0)}{\Lambda r^3} - \frac{3e^2}{\Lambda r^4} } \\
& = \, \frac{2}{r} \, \left( 1 + \frac{ \frac{3}{\Lambda r^2} - \frac{9\varpi(0,0)}{\Lambda r^3} + \frac{6e^2}{\Lambda r^4} }{ 1 - \frac{3}{\Lambda r^2} + \frac{6\varpi(0,0)}{\Lambda r^3} - \frac{3e^2}{\Lambda r^4} } \right)  \\
& \geq \, \frac{2}{r} \, ,
\end{aligned}
\end{equation}
for any $r\geq r_1$ with $r_1$ sufficiently large.
As an immediate consequence, for $r_1\leq r_a\leq r_b$, we have
\begin{equation}\label{expintfrac1minusmu}
\exp \left[ - \int_{r_a}^{r_b} \frac{\partial_r (1-\mu)}{1-\mu} dr \right]  \leq \left(\frac{r_a}{r_b}\right)^2 \; .
\end{equation}
Using this estimate we find that
\begin{equation}
\begin{aligned}
\left| \frac{\theta}{\lambda}\right| (u,v)   \leq  & \left| \frac{\theta}{\lambda}\right| (u_{r_2}(v),v)  \left( \frac{r_2}{r} \right)^2  +\\
& + \frac{1}{r^2} \, \int_{u_{r_2}(v)}^u { \left( \left| \frac{\zeta}{\nu} \right|  \nu  r \right) (u',v)  du'} \; . \\
\end{aligned}
\end{equation}
Using the hypothesis of the lemma we get
\begin{equation}
\begin{aligned}
\left| {\frac{\tilde\theta}{\tilde\lambda}} (r,v) \right| & \leq  C_2 \left( \frac{r_2}{r} \right)^2 +  \frac{C_2}{r^2}  \int_{r_2}^r { r'  dr'} \\
& = C_2 \left( \frac{r_2}{r} \right)^2 + \frac{C_2}{r^2} \frac{r^2 - {r_2}^2}{2}\\
& =\frac{C_2}{2} \left( 1+ \frac{{r_2}^2}{r^2} \right) \;.
\end{aligned}
\end{equation}

Analogously, by integrating the equation
\begin{equation}
\partial_v \frac{\zeta}{\nu}= - \frac{\theta}{r} - \frac{\zeta}{\nu} \frac{\partial_v \nu}{\nu}\;,
\end{equation}
we obtain
\begin{equation}
\left| {\frac{\tilde\zeta}{\tilde\nu}} (r,v) \right| \, \leq \, \frac{C_2}{2} \left( 1+ \frac{{r_2}^2}{r^2} \right) \,,
\end{equation}
and the lemma follows.
\end{proof}
We proceed by showing that $f$ is non-increasing in $r\geq r_1$. For that purpose, assume there exist $r_1<r_2<r_3$ with $f(r_2)<f(r_3)$.
Then, there exists $r_*\in(r_2,r_3]$ such that
$$C_*:=\max_{r\in[r_2,r_3]}f(r)=f(r_*)\;.$$
Using Lemma~\ref{bootF} with $C_2=C_*$ we obtain the contradiction
$$C_*=f(r_*)\leq \frac{C_*}{2}\left (1+\frac{{r_2}^2}{{r_*}^2}\right)<C_* \, ,$$
and the claimed monotonicity follows.

As a consequence, we can in fact write, for all $r$ and $r_2$ such that $r_1\leq r_2\leq r$,
\begin{equation}\label{bestestimate}
f(r)\leq \frac{f(r_2)}{2} \left( 1+ \frac{{r_2}^2}{r^2} \right) \, .
\end{equation}
This can in turn be rearranged to yield
\begin{equation}
f(r)-f(r_2)  \leq\frac{f(r_2)}{2} \left( \frac{{r_2}^2}{r^2} - 1 \right) \, .
\end{equation}
Dividing by $r-r_2$ gives
\begin{equation}
\frac{f(r) - f(r_2)}{r -r_2} \leq \frac{f(r_2)}{2} \frac{{r_2}^2 - r^2}{r^2 ( r - r_2 )} \; .
\end{equation}
Since $f$ is a Lipschitz-continuous function, for almost every $r_2\geq r_1$,
\begin{equation}
\begin{aligned}
f'(r_2) \, & \leq \, \frac{f(r_2)}{2} \lim_{r \to r_2} \frac{{r_2}^2 - r^2}{r^2 ( r - r_2 )} \\
& =  -\frac{f(r_2)}{2} \lim_{r \to r_2} \frac{r + r_2}{r^2} \\
& =  -\frac{f\left(r_2\right)}{r_2} \;.
\end{aligned}
\end{equation}
Finally, integrating the last inequality we obtain, for all $r_1\leq r_2 \leq r$,
\begin{equation}
f(r)\leq f(r_2) \exp \left( -\int_{r_2}^r \frac{dr'}{r'} \right)= f(r_2) \frac{r_2}{r} \;.
\end{equation}
\end{proof}

As an immediate consequence we have
\begin{Cor}
\label{corGrad}
There exists $r_1>r_c$ and $C_1\geq 0$, depending on $r_1$, such that, in ${\cal Q}\cap\{r\geq r_1\}$,
\begin{equation}
\label{estGrad}
\max\left\{\left| {\frac{\theta}{\lambda}} \right| , \left| {\frac{\zeta}{\nu}} \right|\right\} \leq  \frac{C_1}{r}\;.
\end{equation}
\end{Cor}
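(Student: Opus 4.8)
The plan is to derive Corollary~\ref{corGrad} as a direct consequence of Theorem~\ref{decayScalar}. Recall that Theorem~\ref{decayScalar} provides an $r_1 > r_c$ such that $f(r) \leq f(r_2) \frac{r_2}{r}$ for all $r_1 \leq r_2 \leq r$, where $f$ is defined by~\eqref{deff}. The key observation is simply to specialize this inequality by fixing $r_2 = r_1$: then $f(r) \leq f(r_1) \frac{r_1}{r}$ for all $r \geq r_1$. Setting $C_1 := f(r_1) r_1$, which is a finite non-negative constant depending only on $r_1$ (and on the data), we obtain $f(r) \leq C_1 / r$ for all $r \geq r_1$.

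It then remains to unwind the definition of $f$. By~\eqref{deff}, for any $(u,v) \in {\cal Q} \cap \{r \geq r_1\}$ with $r = r(u,v)$, we have
\begin{equation*}
\max\left\{\left|\frac{\theta}{\lambda}(u,v)\right|, \left|\frac{\zeta}{\nu}(u,v)\right|\right\} = \max\left\{\left|\frac{\tilde\theta}{\tilde\lambda}(r(u,v),v)\right|, \left|\frac{\tilde\zeta}{\tilde\nu}(r(u,v),v)\right|\right\} \leq f(r(u,v)) \leq \frac{C_1}{r(u,v)}\;,
\end{equation*}
using in the first equality the change of variables $\tilde h(r,v) = h(u_r(v),v)$ together with the fact that $(u,v) = (u_{r(u,v)}(v), v)$, which is valid on ${\cal Q} \cap \{r \geq r_1\}$ since $(u,v) \mapsto (r(u,v),v)$ is a diffeomorphism there (as recorded in Section~\ref{sectionScalarAss}). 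This is precisely~\eqref{estGrad}.

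There is essentially no obstacle here: the corollary is a bookkeeping consequence of the theorem, and the only point requiring a moment's care is that $\nu > 0$ in ${\cal Q} \cap \{r \geq r_1\} \subset {\cal Q} \setminus \{v=0\}$ (by~\eqref{nuSign}, since $r_1 > r_c > r(0,0)$ forces such points off the horizon), so that the quotient $\zeta/\nu$ is well-defined on the region in question; likewise $\lambda > 0$ everywhere in ${\cal Q}$ by~\eqref{lambdaSign}. Both facts are already established. Thus the proof of the corollary consists of taking $r_2 = r_1$ in Theorem~\ref{decayScalar} and setting $C_1 = f(r_1) r_1$.
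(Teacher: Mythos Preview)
Your proof is correct and matches the paper's approach exactly: the paper presents Corollary~\ref{corGrad} as an ``immediate consequence'' of Theorem~\ref{decayScalar}, and specializing $r_2=r_1$ with $C_1:=f(r_1)r_1$ is precisely the intended one-line argument. Your added remark that $\nu>0$ on $\{r\geq r_1\}$ (because $r(u,0)\leq r_c<r_1$ forces $v>0$ there) is a welcome clarification.
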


\section{The causal character of $r=\infty$}

The estimates of the previous section will now allow us to obtain estimates for all geometric quantities in the region $r\geq r_1$, for $r_1>r_c$ sufficiently large. Those in turn will give the necessary information to show that $r=\infty$ is, in an appropriate sense, a spacelike surface.

We start by fixing an $r_1$ as in Corollary~\ref{corGrad}. Then, by integrating~\eqref{EvKbar} from $r=r_1$ while using~\eqref{kbarr0} and~\eqref{estGrad} we get
\begin{equation}
\begin{aligned}
 \left| \underbar{$\kappa$} (u,v) \right| \, & = \, \left| \underbar{$\kappa$} (u,v_{r_1}(u)) \right| \, \exp \left[ \int_{v_{r_1}(u)}^v \left| \frac{\theta}{\lambda} \right|^2 \frac{\lambda}{r} (u,v') \, dv' \right] \\
& \leq C_{d,r_1} \exp \left[ {C_1}^2 \int_{r_1}^r (r')^{-3} \, dr' \right] \\
& \leq  C_{d,r_1} \, .
\end{aligned}
\end{equation}
From Lemma~\ref{bounds} there exists $C>0$ such that
\begin{equation} \label{1-mu_bound}
\left| 1-\mu \right|  \leq  C \, r^2 \,,
\end{equation}
which, together with the previous estimate, immediately gives
\begin{equation}
\label{nuAbove}
\nu \leq C_{d,r_1} r^2\;.
\end{equation}

We also have, in view of~\eqref{lambdaLower} and~\eqref{expintfrac1minusmu},
\begin{equation}
\label{lambdaA1}
\begin{aligned}
\lambda(u,v) & =  \lambda(u_{r_1}(v),v) \, \exp \left[ \int_{u_{r_1}(v)}^{u} \frac{\nu \, \partial_r (1-\mu)}{1-\mu} (u',v) du' \right]
 \\
& \geq \underbar{$\lambda_0$} \, \left( \frac{r}{r_1} \right)^2 \;.\\
\end{aligned}
\end{equation}

From~\eqref{EvMass_v} we have
\begin{equation}
\varpi (u,v) \, = \, \varpi (u,v_{r_1}(u)) \, + \, \frac{1}{2} \int_{v_{r_1}(u)}^v \left( \frac{\theta}{\lambda} \right)^2 \, \lambda \, \left( 1-\mu \right) \, (u,v') \, dv' \, ,
\end{equation}
which we can estimate using \eqref{massAbove}, \eqref{estGrad} and \eqref{1-mu_bound}:
\begin{equation}
\label{massInft}
\begin{aligned}
\left| \varpi (u,v) \right|  & \leq \left| \varpi (u,v_{r_1}(u)) \right|  + \frac{1}{2} \int_{v_{r_1}(u)}^v \left| \frac{\theta}{\lambda} \right|^2  \lambda \left| 1-\mu \right|  (u,v') dv'
\\
& \leq C_{d,r_1} + C_{d,r_1} \int_{v_{r_1}(u)}^v \frac{1}{r^2} \lambda  r^2  (u,v')  dv'
\\
& \leq C_{d,r_1}  r\;.
\end{aligned}
\end{equation}
In view of this estimate, we have
\begin{equation}
\frac{\partial_r (1-\mu)}{1-\mu} =  \frac{2}{r} \, \left( 1 + \frac{ \frac{3}{\Lambda r^2} - \frac{9\varpi}{\Lambda r^3} + \frac{6e^2}{\Lambda r^4} }{ 1 - \frac{3}{\Lambda r^2} + \frac{6\varpi}{\Lambda r^3} - \frac{3e^2}{\Lambda r^4} } \right) \leq \frac{2}{r} + \frac{C_{d,r_1}}{r^2}\, ,
\end{equation}
for $r \geq r_1$ with $r_1$ sufficiently large. Using this inequality together with~\eqref{lambdaAbove} yields
\begin{equation}
\label{lambdaA2}
\begin{aligned}
\lambda (u,v)  & = \lambda (u_{r_1}(v),v) \, \exp \left[ \int_{u_{r_1}(v)}^{u} \frac{\nu \, \partial_r (1-\mu)}{1-\mu} (u',v) du' \right]  \\
& = \lambda(u_{r_1}(v),v) \, \exp \left[ \int_{r_1}^{r} \frac{\partial_r (1-\mu)}{1-\mu} (r',v) dr' \right] \\
& \leq C_{d,r_1}\exp\left({C_{d,r_1}u_{r_1}(v)}\right)\, \exp\left[ \log \left(\frac{r^2}{{r_1}^2}\right) + \frac{C_{d,r_1}}{r_1} \right] \\
& \leq C_{d,r_1}\exp\left({C_{d,r_1}u_{r_1}(v)}\right)\, r^2 \\
& \leq C_{d,r_1}\exp\left({C_{d,r_1}u}\right)\, r^2 \, .
\end{aligned}
\end{equation}

Next, we consider the level sets of $r$. Differentiating the identity
$$r = r(u_r(v),v)\;,$$
leads to
\begin{equation}
\begin{aligned}
 0  &=  \nu(u_r(v),v) \, u'_r(v)  +  \lambda(u_r(v),v) \\
&\Rightarrow  u'_r(v)  =  -\frac{\lambda(u_r(v),v)}{\nu(u_r(v),v)} \; .
\end{aligned}
\end{equation}

This means that, in view of~\eqref{lowerboundnu},~\eqref{nuAbove},~\eqref{lambdaA1} and~\eqref{lambdaA2}, we have the following result:
\begin{proposition}
\label{spacelike}
The curve $r=\infty$ is spacelike, in the sense that there exists $r_1>r_c$ and a constant $C_{d,r_1}>0$, depending only on $r_1$ and on the size of the initial data for \IVP, such that, for any $r\geq r_1$,
\begin{equation}
-C_{d,r_1}\exp\left({C_{d,r_1}u_{r_1}(v)}\right) < u'_r(v) < -C^{-1}_{d,r_1}\;,
\end{equation}
for all $v\in(0,V]$.
\end{proposition}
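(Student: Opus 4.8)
The plan is to read off Proposition~\ref{spacelike} directly from the identity $u'_r(v)=-\lambda(u_r(v),v)/\nu(u_r(v),v)$ derived just above the statement, by feeding in the upper and lower bounds for $\nu$ and $\lambda$ that have already been assembled in this section. The two inequalities are obtained separately:

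\textbf{The upper bound $u'_r(v)<-C^{-1}_{d,r_1}$.} Here we need $\lambda/\nu$ bounded below by a positive constant, i.e. $\nu$ bounded above by a constant multiple of $\lambda$. From~\eqref{nuAbove} we have $\nu\leq C_{d,r_1}r^2$, while from~\eqref{lambdaA1} we have $\lambda\geq \underbar{$\lambda_0$}(r/r_1)^2$. Dividing, $\nu/\lambda\leq C_{d,r_1}r_1^2/\underbar{$\lambda_0$}$, a constant depending only on $r_1$ and the data. Hence $u'_r(v)=-\lambda/\nu\leq -\underbar{$\lambda_0$}/(C_{d,r_1}r_1^2)=:-C^{-1}_{d,r_1}$, which is the claimed strict negative upper bound (strictness is clear since the bound is a negative constant).

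\textbf{The lower bound $u'_r(v)>-C_{d,r_1}\exp(C_{d,r_1}u_{r_1}(v))$.} Here we need $\lambda/\nu$ bounded above, i.e. $\lambda$ bounded above and $\nu$ bounded below. For the numerator we use~\eqref{lambdaA2}: $\lambda(u_r(v),v)\leq C_{d,r_1}\exp(C_{d,r_1}u_r(v))r^2$. For the denominator we use the key lower bound~\eqref{lowerboundnu} from the proof of Theorem~\ref{blowup}: for $r\geq r_1$ (with $r_1$ large enough), $\nu(u_r(v),v)\geq A\,r^2(u_r(v),v)=A\,r^2$, with $A>0$ depending on $r_1$. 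Dividing cancels the $r^2$ factors: $\lambda/\nu\leq (C_{d,r_1}/A)\exp(C_{d,r_1}u_r(v))$. It remains only to replace $u_r(v)$ by $u_{r_1}(v)$ in the exponent: since $\nu\geq 0$ in ${\cal Q}$, the function $u\mapsto r(u,v)$ is nondecreasing, so $r_1\leq r$ forces $u_{r_1}(v)\leq u_r(v)$ is \emph{false} in general — rather, along the spacelike curves, the reverse ordering holds; one checks from $\nu>0$ in ${\cal Q}\setminus\{v=0\}$ that larger $r$ corresponds to larger $u$ at fixed $v$, hence $u_{r_1}(v)\leq u_r(v)$ and $\exp(C_{d,r_1}u_r(v))$ cannot be bounded by $\exp(C_{d,r_1}u_{r_1}(v))$ that way. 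The correct route is instead to note that in~\eqref{lambdaA2} the exponential can already be written with $u_{r_1}(v)$ in place of $u$ before propagating to $u$; that is, the penultimate line of~\eqref{lambdaA2} reads $\lambda(u,v)\leq C_{d,r_1}\exp(C_{d,r_1}u_{r_1}(v))r^2$, and we simply evaluate at $u=u_r(v)$. Combining with $\nu\geq A r^2$ gives $u'_r(v)=-\lambda/\nu\geq -(C_{d,r_1}/A)\exp(C_{d,r_1}u_{r_1}(v))$, as claimed, after absorbing $A^{-1}$ into the constant $C_{d,r_1}$.

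The only genuine subtlety — and the step I would be most careful about — is bookkeeping the dependence on $u$ versus $u_{r_1}(v)$ in the $\lambda$ upper bound, i.e. making sure the exponential growth factor is expressed in terms of $u_{r_1}(v)$ (which is where the curve $r=r_1$ meets the slice, and which is exactly the quantity appearing in the statement) rather than some larger $u$-value along the curve $r=\mathrm{const}$; this is handled precisely by reading~\eqref{lambdaA2} at its penultimate line rather than its last line. Everything else is a direct substitution of the bounds~\eqref{lowerboundnu},~\eqref{nuAbove},~\eqref{lambdaA1},~\eqref{lambdaA2} into the formula for $u'_r(v)$, with all constants depending only on $r_1$ and the size of the initial data for \IVP, uniformly in $v\in(0,V]$.
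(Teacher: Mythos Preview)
Your approach is exactly the paper's: the proposition is read off from $u'_r(v)=-\lambda/\nu$ together with the four bounds~\eqref{lowerboundnu}, \eqref{nuAbove}, \eqref{lambdaA1}, and~\eqref{lambdaA2}, and the key observation that the penultimate line of~\eqref{lambdaA2} already carries $u_{r_1}(v)$ in the exponent is precisely what makes the lower bound come out in the stated form. The self-correcting detour in your second paragraph is confusingly worded (you first claim $u_{r_1}(v)\leq u_r(v)$ is ``\emph{false} in general'' and then immediately affirm it), but your final resolution --- use the penultimate rather than the last line of~\eqref{lambdaA2} --- is correct and is all that is needed.
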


\begin{Remark}
Recall from~\eqref{uInftyV0} that $u_{r_1}(v)$ blows up as $v\rightarrow 0$.
\end{Remark}

For future reference, we collect some more estimates that can be easily derived from the information gathered until now.
From~\eqref{estGrad} and~\eqref{nuAbove} we have
\begin{equation}
| \zeta | =\left|\frac{\zeta}{\nu}\right||\nu|\leq C_{d,r_1}r\;.
\end{equation}
Analogously, \eqref{estGrad} and~\eqref{lambdaA2} give
\begin{equation}
\left| \theta \right| \leq C_{d,r_1}\exp\left({C_{d,r_1}u}\right) r \; .
\end{equation}
Note that, as an immediate consequence of the definitions of $\zeta$ and $\theta$, there exists $C>0$, such that in $\cal Q$
\begin{equation}
\label{boundDuPhi}
\left| \partial_u \phi \right|  \leq C\;,
\end{equation}
and
\begin{equation}
\label{boundDvPhi}
\left| \partial_v \phi \right|  \leq C e^{Cu}\; .
\end{equation}

\begin{Remark}
This last estimate is somewhat unpleasant and cruder than expected. Presumably one should be able to improve it by using the techniques of~\cite{CGNS4}, but this might require
imposing some extra decaying conditions on $\zeta_0(u)$. We will not pursue this here since~\eqref{boundDvPhi} turns out to be sufficient for our goals.
\end{Remark}


\section{Cosmic No-Hair}
\label{sectionNoHair}

Let $i=(u_{\infty},v_{\infty})$ be a point in $\{r=\infty\}\subset\overline{\cal Q}$.  Define a new coordinate $\tilde u$ by demanding that
\begin{equation}
 \tilde \uk(\tilde u,v_{\infty} )\equiv-1\;,
\end{equation}
where $\tilde\uk:=\frac{\tilde \nu}{1-\mu}$ and $\tilde \nu=\partial_{\tilde u}{r}$.
Then, $u=u(\tilde u)$  is such that
\begin{equation}
 \frac{du}{d\tilde u}=\frac{\tilde{\nu}(\tilde u, v_{\infty})}{\nu (u(\tilde u), v_{\infty})}=-\frac{(1-\mu)(\tilde u, v_{\infty})}{\nu (u(\tilde u)), v_{\infty})}\sim 1\;.
\end{equation}
The last estimate follows from the results of the previous section and in particular shows that the function $\tilde u=\tilde u(u)$ has a finite limit when $u\rightarrow u_\infty$.

In the region $r\geq r_1$, for $r_1$ as before, we can define $\kappa$ by
\begin{equation}
\label{kappa}
\kappa=\frac{\lambda}{1-\mu}\;.
\end{equation}
Analogously to the definition of $\tilde u$, we define $\tilde v=\tilde v(v)$ by demanding that
\begin{equation}
 \tilde \kappa(u_{\infty},\tilde  v )\equiv-1\;,
\end{equation}
which is equivalent to
\begin{equation}
 \frac{dv}{d\tilde v}=\frac{\tilde{\lambda}(u_{\infty},\tilde  v )}{\lambda (u_{\infty},v(\tilde  v) )}=-\frac{(1-\mu)(u_{\infty},\tilde  v )}{\lambda (u_{\infty},v(\tilde  v ))}\sim 1\;.
\end{equation}
We can now shift our coordinates so that $i=(\tilde u=0,\tilde v=0)$\;.

In these new coordinates the spacetime metric,
\begin{equation}
g=-\tilde \Omega^2(\tilde u,\tilde v) d\tilde ud \tilde v +r^2(\tilde u,\tilde v)\mathring{g}\;,
\end{equation}
satisfies
\begin{equation}
\label{omegaDym}
 \tilde \Omega^2(\hu,\hv)=-4\tilde \kappa \tilde \uk (1-\mu)(\hu,\hv)\;.
\end{equation}

Now consider a point $i_{dS}$ in the future null infinity of the pure de Sitter spacetime with cosmological constant $\Lambda$.
We can cover a neighborhood of the past of this point by null coordinates for which $i_{dS}=(\tilde u=0,\tilde v=0)$ and the metric takes the form
\begin{equation}
^{dS}\!g=- \Omega^2_{dS}(\tilde u,\tilde v)d\tilde ud \tilde v +r_{dS}^2(\tilde u,\tilde v)\mathring{g}\;,
\end{equation}
with
\begin{equation}
\label{omegaDS}
\tilde \Omega_{dS}^2(\tu,\tv)=- 4\left(1-\frac{\Lambda}{3}r^2_{dS}(\tilde u,\tilde v)\right)\;.
\end{equation}
We identify $J^-(i)\cap\{r\geq r_1\}\subset\{(\hu,\hv)\,:\, \hu\leq 0\;, \hv \leq 0)\}\setminus\{(0,0)\}$, in our dynamic solution $({\cal Q}, g)$, with the region of de Sitter space with the same range of the $(\hu,\hv)$ coordinates.

The first goal of this section is to show that  the components of (the dynamic spacetime metric) $g$ approach those of the de Sitter metric $^{dS}\!g$ in
$J^-(i)\cap\{r\geq r_2\}$ as $r_2\rightarrow\infty$. With that in mind, we start by establishing the following
\begin{lemma}
 In $J^-(i)\cap\{r\geq r_2\}$\;, for sufficiently large $r_2$,
\begin{equation}
\label{tildeks}
 -1\leq \tilde\uk\;, \tilde \kappa \leq -1+\frac{C}{{r_2}^2}\;.
\end{equation}
\end{lemma}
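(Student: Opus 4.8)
The plan is to first unwind the two coordinate rescalings $\tilde u=\tilde u(u)$ and $\tilde v=\tilde v(v)$, reducing the claim to monotonicity and decay properties of the original quantities $\uk$ and $\kappa$. Since $\tilde u$ depends on $u$ only, $\tilde\nu=\frac{du}{d\tilde u}\,\nu$ and hence $\tilde\uk=\frac{du}{d\tilde u}\,\uk$; the factor $\frac{du}{d\tilde u}$ is a function of $u$ alone, so it is determined by its value along $v=v_{\infty}$, where the normalization $\tilde\uk(\cdot,v_{\infty})\equiv-1$ together with the coordinate-invariance of $(1-\mu)$ (a function of $(r,\varpi)$) forces $\frac{du}{d\tilde u}=-1/\uk(u,v_{\infty})$. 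This yields
\[
\tilde\uk(u,v)=-\frac{\uk(u,v)}{\uk(u,v_{\infty})},\qquad \tilde\kappa(u,v)=-\frac{\kappa(u,v)}{\kappa(u_{\infty},v)},
\]
the second identity following identically from $\tilde\kappa(u_{\infty},\cdot)\equiv-1$. Since $J^-(i)$ corresponds to $\{u\leq u_{\infty},\,v\leq v_{\infty}\}$, everything reduces to controlling these two ratios for $r\geq r_2$.

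For the lower bounds I would invoke monotonicity. Equation~\eqref{EvKbar}, $\partial_v\uk=\frac{\uk\,\theta^2}{r\lambda}$, combined with $\uk<0$, shows that $\uk$ is non-increasing in $v$, so $|\uk|$ is non-decreasing in $v$; as $v\leq v_{\infty}$ this gives $|\uk(u,v)|\leq|\uk(u,v_{\infty})|$, i.e. $\tilde\uk\geq-1$. For $\kappa$ I would first record the evolution equation $\partial_u\kappa=\frac{\kappa\,\zeta^2}{r\nu}$, which follows from the $u\leftrightarrow v$ symmetry of~\eqref{EvNuLambda}--\eqref{Kbar} under $(\nu,\zeta,\uk)\leftrightarrow(\lambda,\theta,\kappa)$ (or by direct computation using $\nu=\uk(1-\mu)$ and~\eqref{EvMass_u}); since $\kappa<0$, $|\kappa|$ is non-decreasing in $u$, and $u\leq u_{\infty}$ gives $|\kappa(u,v)|\leq|\kappa(u_{\infty},v)|$, i.e. $\tilde\kappa\geq-1$.

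For the upper bounds I would quantify the same monotonicity. Integrating~\eqref{EvKbar} in $v$ gives
\[
|\uk(u,v_{\infty})|-|\uk(u,v)|=\int_v^{v_{\infty}}|\uk|\,\frac{\theta^2}{r\lambda}\,dv'\;.
\]
Bounding $|\uk(u,v')|\leq|\uk(u,v_{\infty})|$ under the integral, writing $\frac{\theta^2}{r\lambda}=\left(\frac{\theta}{\lambda}\right)^2\frac{\lambda}{r}$ and using $\left|\frac{\theta}{\lambda}\right|\leq\frac{C_1}{r}$ from Corollary~\ref{corGrad}, and finally changing variables through $dr'=\lambda\,dv'$ along constant $u$, the integral is controlled by $|\uk(u,v_{\infty})|\,C_1^2\int_{r(u,v)}^{\infty}(r')^{-3}\,dr'\leq|\uk(u,v_{\infty})|\,\frac{C_1^2}{2r_2^2}$ whenever $r(u,v)\geq r_2$. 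Dividing by $|\uk(u,v_{\infty})|$ gives $\frac{|\uk(u,v)|}{|\uk(u,v_{\infty})|}\geq1-\frac{C}{r_2^2}$, that is $\tilde\uk\leq-1+\frac{C}{r_2^2}$, provided $r_2\geq r_1$. The bound for $\tilde\kappa$ is entirely analogous: integrate $\partial_u\kappa$ in $u$, factor out $|\kappa(u_{\infty},v)|$, and use $\left|\frac{\zeta}{\nu}\right|\leq\frac{C_1}{r}$.

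The step demanding the most care is the first one: correctly reading off the ratio representations of $\tilde\uk$ and $\tilde\kappa$ from the normalizations, verifying that $\frac{du}{d\tilde u}$ (respectively $\frac{dv}{d\tilde v}$) is genuinely a function of $u$ (respectively $v$) alone, and, crucially, matching the monotonicity directions to the geometry --- $|\uk|$ grows towards the future in $v$ and $|\kappa|$ towards the future in $u$, while in $J^-(i)$ the reference values $\uk(u,v_{\infty})$ and $\kappa(u_{\infty},v)$ sit at the future endpoints, so the ratios do not exceed $1$. Once this is settled, the remaining estimates are routine applications of the decay $\left|\frac{\theta}{\lambda}\right|,\left|\frac{\zeta}{\nu}\right|\lesssim r^{-1}$ already established; the only minor point is that the $r'$-integration runs up to $r=\infty$ at its upper endpoint, but since $\int^{\infty}(r')^{-3}dr'$ converges, the bound is uniform over $J^-(i)\cap\{r\geq r_2\}$.
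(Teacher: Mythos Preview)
Your proof is correct and follows essentially the same approach as the paper's: both arguments hinge on the monotonicity of $\uk$ in $v$ (resp.\ $\kappa$ in $u$) coming from~\eqref{EvKbar} and its dual, together with the scalar field decay $|\theta/\lambda|,|\zeta/\nu|\lesssim r^{-1}$, integrated against $\lambda\,dv'=dr'$ to produce an $O(r_2^{-2})$ error. The only cosmetic difference is that the paper works directly in the $(\tilde u,\tilde v)$ coordinates using the exponential form $\tilde\uk(\tu,\tv)=\tilde\uk(\tu,0)\exp\big(-\int_{\tv}^0(\tilde\theta/\tilde\lambda)^2(\tilde\lambda/r)\,dv'\big)$, whereas you first unwind the coordinate change to the ratio formula $\tilde\uk=-\uk/\uk(\cdot,v_\infty)$ and then integrate additively; these are equivalent renderings of the same ODE estimate.
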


\begin{proof}
 Concerning $\tilde\uk$, the estimate from below is an immediate consequence of the fact that $\partial_{\tilde v}\tilde \uk\leq 0$ (see \eqref{EvKbar}).
 To establish the other inequality, recall Theorem~\ref{decayScalar}, and note that
 $$\left|\frac{\tilde \theta}{\tilde \lambda}\right|(\tu,\tv) \leq f(r(\tu,\tv))\leq f(r_2)\frac{r_2}{r(\tu,\tv)}\;,$$
 with
 $$f(r_2)\leq f(r_1)r_1\frac{1}{r_2}=:C_1\frac{1}{r_2}\;.$$
 Then, for $\tv\leq 0$ we get
\begin{equation}
\begin{aligned}
  \tilde \uk (\tu,\tv)  & = \tilde \uk (\tu,0) \exp \left(- \int_{\tv}^0 \left| \frac{\tilde \theta}{\tilde \lambda} \right|^2 \frac{\tilde \lambda}{r} (\tu,v') \, dv' \right)
  \\
& \leq -\exp \left(-f^2(r_2){r_2}^2 \int_{\tv}^0 \frac{\tilde \lambda}{r^3} (\tu,v') \, dv' \right)
\\
& \leq  -\exp \left(-f^2(r_2)\frac{{r_2}^2}{2} \left(\frac{1}{r^2(\tu,\tv)}- \frac{1}{r^2(\tu,0)} \right) \right)
\\
& \leq -\exp \left(-f^2(r_2) \right)
\\
& \leq-\exp \left(-\frac{C}{{r_2}^2} \right)
\\
&\leq -1+\frac{C}{{r_2}^2} \;.
\end{aligned}
\end{equation}

The results concerning $\tilde \kappa$ follow in a dual manner by using the equation
\begin{equation}
 \partial_{\tu}\tilde \kappa =\tilde \kappa\left( \frac{\tilde \zeta}{\tilde \nu}\right)^2\frac{\tilde \nu}{r}\;.
\end{equation}
\end{proof}

As an immediate consequence, in $J^-(i)\cap\{r\geq r_2\}$,
\begin{equation}
\label{1-muAB1}
-\left(1-\frac{C}{{r_2}^2}\right)(1-\mu)\leq\tilde \lambda, \tilde \nu  \leq -(1-\mu)\;.
\end{equation}
Since, in view of~\eqref{massInft}, we have 
$$-(1-\mu)\leq C+\frac{\Lambda}{3}r^2=\left(1+\frac{3C}{\Lambda r^2}\right)\frac{\Lambda}{3}r^2\leq \left(1+\frac{C}{ {r_2}^2}\right)\frac{\Lambda}{3}r^2$$
and, similarly,
$$-(1-\mu)\geq \left(1-\frac{C}{ {r_2}^2}\right)\frac{\Lambda}{3}r^2\; ,$$
estimates~\eqref{1-muAB1} give rise to
\begin{equation}
\label{gradR}
\left(1-\frac{C}{{r_2}^2}\right)\frac{\Lambda}{3}r^2(\tu,\tv)\leq \tilde \lambda, \tilde \nu  \leq \left(1+\frac{C}{{r_2}^2}\right)\frac{\Lambda}{3}r^2(\tu,\tv)\; ,
\end{equation}
for sufficiently large $r_2$ and $r>r_2$.

For the de Sitter spacetime we have
\begin{equation}
\label{gradRdS}
\left(1-\frac{C}{{r_2}^2}\right)\frac{\Lambda}{3}r_{dS}^2(\tu,\tv)\leq \tilde \lambda_{dS}, \tilde \nu_{dS}\leq \frac{\Lambda}{3} r_{dS}^2(\tu,\tv)\;.
\end{equation}

\begin{lemma}
 Let $A,B$ be positive constants and $r=r(u,v)$ a function in $J^-(0,0)=\{(u,v)\,:\, u\leq 0\;, v \leq 0)\}\setminus\{(0,0)\}$ satisfying
\begin{equation}
Br^2(u,v)\leq \partial_vr , \partial_ur  \leq A r^2(u,v)\;,
\end{equation}
and
\begin{equation}
r(x,0)\rightarrow \infty\;\;,\;\; r(0,x)\rightarrow \infty,\;\text{ as } x\rightarrow 0\;.
\end{equation}
Then, in $J^-(0,0)$,
\begin{equation}
-\frac{1}{A(u+v)}\leq r(u,v)\leq -\frac{1}{B(u+v)}\;.
\end{equation}
\end{lemma}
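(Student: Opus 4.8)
The plan is to reduce the whole statement to a one–dimensional estimate for the quantity $-1/r$. The point is that wherever $r>0$ the hypothesis $Br^2\le\partial_u r\le Ar^2$ is, after dividing by $r^2$, exactly the statement $B\le\partial_u\!\left(-\tfrac1r\right)\le A$, and similarly in the $v$ direction. Thus $-1/r$ has derivatives controlled from above and below along both coordinate directions, and the conclusion will follow by integrating these along a suitable path and combining. Note also that since $\partial_u r,\partial_v r\ge Br^2\ge0$ the function $r$ is non-decreasing in each variable separately; together with the prescribed blow-up along the edges this also yields $r>0$ on all of $J^-(0,0)$ (and in the intended application one has $r\ge r_2>0$ anyway), which is all the positivity needed to divide by $r$.

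First I would fix a point $(u,v)\in J^-(0,0)$, observing that necessarily $u+v<0$ because $u,v\le0$ and $(u,v)\neq(0,0)$. I would integrate $\partial_s\!\left(-1/r(s,v)\right)=\partial_s r(s,v)/r(s,v)^2\in[B,A]$ along the horizontal segment $s\in[u,0]$, which gives $-Bu\le \tfrac{1}{r(u,v)}-\tfrac{1}{r(0,v)}\le -Au$. Next I would integrate $\partial_s\!\left(-1/r(0,s)\right)\in[B,A]$ along the vertical segment from $(0,v)$ up towards the corner; here the hypothesis $r(0,x)\to\infty$ as $x\to0^-$ forces $-1/r(0,x)\to0$, so the (improper, but convergent, since the integrand is bounded by $A$) fundamental theorem of calculus gives $-Bv\le \tfrac{1}{r(0,v)}\le -Av$. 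Adding the two bounds produces $-B(u+v)\le \tfrac{1}{r(u,v)}\le -A(u+v)$, and since $u+v<0$ and $r(u,v)>0$ I may invert to obtain precisely $-\tfrac{1}{A(u+v)}\le r(u,v)\le -\tfrac{1}{B(u+v)}$.

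The argument is essentially routine once it is set up, so I do not expect a real obstacle. The only points that need a little care are (i) checking that the integration along $\{u=0\}$ all the way to the corner is legitimate — which it is, since $\partial_v(-1/r)$ is bounded by $A$ and $-1/r(0,x)\to0$ as $x\to0^-$ — and (ii) the bookkeeping ensuring $r>0$ (hence $1/r$ is well defined) along the whole path. The one mild structural subtlety worth flagging is that the boundary data lives on the two edges $\{v=0\}$ and $\{u=0\}$ rather than on the diagonal, which is why the integration path must be the $L$-shaped one (horizontal then vertical) and why the final estimate is naturally expressed in terms of $u+v$.
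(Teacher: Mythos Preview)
Your proof is correct and follows essentially the same approach as the paper: integrate the differential inequality for $-1/r$ along the $L$-shaped path $(u,v)\to(0,v)\to(0,0)$, using the blow-up hypothesis $r(0,x)\to\infty$ to kill the boundary term at the corner, and then invert. The paper carries out the upper bound for $1/r$ explicitly and states that the lower bound is similar, whereas you do both at once; otherwise the arguments are identical.
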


\begin{proof}
We have
\begin{equation}
\begin{aligned}
\int_u^0 \frac{1}{r^2(u',v)}\frac{\partial r}{\partial u}(u',v) du'\leq \int_u^0  A du'
&\Rightarrow
\int_{r(u,v)}^{r(0,v)} \frac{1}{r^2}dr \leq -Au
\\
&\Rightarrow
  \frac{1}{r(u,v)}-\frac{1}{r(0,v)}\leq -Au \;
\end{aligned}
\end{equation}
and
\begin{equation}
\begin{aligned}
\int_v^{v_1} \frac{1}{r^2(0,v')}\frac{\partial r}{\partial v}(0,v') dv'\leq \int_v^{v_1}  A dv'
&\Rightarrow
\int_{r(0,v)}^{r(0,v_1)} \frac{1}{r^2}dr \leq A(v_1-v)
\\
&\Rightarrow
  \frac{1}{r(0,v)}-\frac{1}{r(0,v_1)}\leq A(v_1-v)
  \\
&\Rightarrow
  \frac{1}{r(0,v)}\leq -Av\;,
\end{aligned}
\end{equation}
where the last implication follows by taking the limit $v_1\rightarrow 0$.
Then,
 $$\frac{1}{r(u,v)}\leq -Au+\frac{1}{r(0,v)}\leq -A(u+v)\;,$$
and the result follows. The other bound is similar.
\end{proof}

In view of~\eqref{gradR},~\eqref{gradRdS} and the previous lemma, we conclude that, in $J^-(i)\cap\{r\geq r_2\}$,
\begin{equation}
\label{rMinusR}
 |r(\tu,\tv)-r_{dS}(\tu,\tv)|\leq -\frac{C}{{r_2}^2}\frac{1}{\tu+\tv}\leq \frac{C}{{r_2}^2}\, r_{dS}(\tu,\tv)\;.
\end{equation}
Then it is clear that
\begin{equation}
\label{rMinusR2}
\begin{aligned}
\frac{|r^2(\tu,\tv)-r^2_{dS}(\tu,\tv)|}{r^2_{dS}(\tu,\tv)}
&\leq \frac{C}{{r_2}^2}\frac{1}{r_{dS}(\tu,\tv)} (r(\tu,\tv)+r_{dS}(\tu,\tv))
\\
& \leq \frac{C}{{r_2}^2}\left(\frac{r}{r_{dS}}(\tu,\tv)+1\right)
\\
&\leq \frac{C}{{r_2}^2}\;.
\end{aligned}
\end{equation}
Recalling~\eqref{omegaDym} and~\eqref{omegaDS}, we obtain, arguing as before,
\begin{equation}
\label{omegaAB}
\left(1-\frac{C}{{r_2}^2}\right)\frac{4\Lambda}{3}r^2(\tu,\tv)\leq \tilde \Omega^2(\tu,\tv) \leq \left(1+\frac{C}{{r_2}^2}\right)\frac{4\Lambda}{3}r^2(\tu,\tv)\;,
\end{equation}
from which we see that
\begin{equation}
\label{OMinusO}
 \frac{|\tilde \Omega^2(\tu,\tv)-\tilde \Omega^2_{dS}(\tu,\tv)|}{\tilde \Omega^2_{dS}(\tu,\tv)}\leq \frac{C}{{r_2}^2}\;.
\end{equation}

Let us now rewrite these last estimates in terms of an orthonormal frame $\{e_a,e_A\}_{a\in\{0,1\},A\in\{3,4\}}$ of de Sitter, where $\{e_A\}$ is an orthonormal frame of $r^2_{dS}\mathring{g}$  and
\begin{eqnarray}
e_0&=&\tilde\Omega_{dS}^{-1}(\partial_{\tu}+\partial_{\tv})\;, \\
e_1&=&\tilde\Omega_{dS}^{-1} (-\partial_{\tu}+\partial_{\tv})\;.
\end{eqnarray}
In the frame above the dynamic metric reads
\begin{equation}
 g_{ab}=\frac{\tilde\Omega^2}{\tilde\Omega^2_{dS}}\eta_{ab}
\end{equation}
and
\begin{equation}
 g_{AB}=\frac{r^2}{r^2_{dS}}\delta_{AB}\;.
\end{equation}
So we see that, in this frame, \eqref{rMinusR2} and~\eqref{OMinusO} imply that
\begin{equation}
\sup_{J^-(i)\cap\{r\geq r_2\}}|g_{\mu\nu}-^{dS}\!\!g_{\mu\nu}|\lesssim r_2^{-2}\;.
\end{equation}

Let us now consider the Riemann curvature, whose components satisfy~\cite{dafRen}[Appendix A]
\begin{eqnarray}
R^a_{bcd}&=&K\left(\delta^a_cg_{bd}-\delta^a_dg_{bc}\right)\;,
\\
R^a_{BcD}&=&-r^{-1}\nabla^a\nabla_crg_{BD}\;,
\\
R^A_{BCD}&=&r^{-2}\left(1-\partial_ar\partial^ar\right)\left(\delta^A_Cg_{BD}-\delta^A_Dg_{BC}\right)\;.
\end{eqnarray}
The Gaussian curvature of the surfaces of fixed angular coordinates is given by~\cite{dafRen}[Appendix A]
\begin{equation}
\label{gaussCurv}
 K=4\tilde\Omega^{-2}\partial_{\tu}\partial_{\tv}\log\tilde\Omega\;.
\end{equation}
Recalling~\eqref{wave_Omega} and using~\eqref{boundDuPhi},~\eqref{boundDvPhi} and~\eqref{omegaAB}, we first obtain
\begin{equation}
\label{gaussCurv2}
 K=4\tilde\Omega^{-2}\left(O(1)+\frac{\tilde \lambda\tilde \nu}{r^2}\right)\;,
\end{equation}
and then, using~\eqref{gradR} and~\eqref{omegaAB} again, we get, for $r>r_2$,
\begin{equation}
\label{KAB}
\left(1-\frac{C}{{r_2}^2}\right)\frac{\Lambda}{3}\leq K\leq \left(1+\frac{C}{{r_2}^2}\right)\frac{\Lambda}{3}\;.
\end{equation}
The previous expression is valid in both spacetimes, consequently,
$$\left|K-K_{dS}\right|\lesssim r_2^{-2}\;.$$
In the $\{e_{\mu}\}$ frame we then have
\begin{eqnarray*}
\left| R^a_{bcd}-^{dS}\!\!R^a_{bcd} \right| & = & \left|\frac{\tilde \Omega^2}{\tilde \Omega_{dS}^2} K-K_{dS}\right|\left|\delta^a_c\eta_{bd}-\delta^a_d\eta_{bc}\right|
\\
&\lesssim& \left|\frac{(\tilde\Omega^2-\tilde\Omega_{dS}^2)}{\tilde\Omega_{dS}^2}K\right|+\left|K-K_{dS}\right|
\\
&\lesssim& r_2^{-2}\;.
\end{eqnarray*}
%
%
%

The Hessian of the radius function is~\cite{dafRen}[Appendix A]\footnote{There is a difference in a factor of $4\pi$ due to our different convention for the value of Newton's gravitational constant.}
\begin{equation}
  \nabla_a\nabla_b r=\frac{1}{2r}\left(1-\partial_cr\partial^cr\right)g_{ab}- r(T_{ab}-\tr{T}g_{ab})-\frac{1}{2}r\Lambda g_{ab} \, ,
\end{equation}
where $\tr{T} = g^{ab}T_{ab}$. We have
$$\partial_cr\partial^cr=1-\mu\;,$$
and, in view of~\eqref{boundDuPhi} and~\eqref{boundDvPhi},
\begin{equation}
\label{ephi}
|e_a\phi|\lesssim r_{dS}^{-1}\;.
\end{equation}
Using this inequality, we can estimate the scalar field part of the energy momentum tensor by
$$|(e_a\phi) (e_b\phi)+2(\partial_{\tilde u}\phi)(\partial_{\tilde v}\phi) \Omega_{dS}^{-2} \eta_{ab}|\lesssim r_{dS}^{-2}\;,$$
and the electromagnetic part is, in view of~\eqref{energymomentumtensor} and~\eqref{F}, bounded by $Cr^{-4}$. Consequently,
$$|T_{ab}|\lesssim r^{-4}+r_{dS}^{-2}\;.$$
Since $\tr{T}=0$, we get
$$|\nabla_a\nabla_b r|\lesssim r\;,$$
and
\begin{eqnarray*}
\left| \nabla_a\nabla_b r- ^{dS}\!\nabla_a^{dS}\!\nabla_b r_{dS} \right|
 &\leq&
 \frac{\Lambda}{6}
 \left[|r-r_{dS}|+ \frac{C}{{r_2}^2}(r+r_{dS})\right]
 \\
 && + Cr \left|r^{-4}+r_{dS}^{-2}\right| + \frac{\Lambda}{2}\left|\frac{\tilde\Omega^2}{\tilde\Omega_{dS}^2}r-r_{dS}\right|
\\
&\lesssim& \frac{r+r_{dS}}{{r_2}^2}\;,
\end{eqnarray*}
where we used the fact that $r,r_{r_{dS}}\gtrsim r_2$, together with~\eqref{rMinusR} and~\eqref{OMinusO}.
%
%

We are now ready to estimate
\begin{eqnarray*}
\left| R^a_{BcD}-^{dS}\!\!R^a_{BcD} \right|
&=&
\left|\left(r^{-1}_{dS}\ ^{dS}\nabla_a^{dS}\nabla_c r_{dS}-r^{-1}\frac{\tilde\Omega^2_{dS}}{\tilde\Omega^2}\frac{r^2}{r_{dS}^2}\nabla_a\nabla_c r\right)\delta_{BD}\right|
\\
&\leq&
\frac{1}{r_{dS}}\left|^{dS}\nabla_a^{dS}\nabla_c r_{dS}-\nabla_a\nabla_c r + \left(1-\frac{\tilde\Omega^2_{dS}}{\tilde\Omega^2}\frac{r}{r_{dS}}\right)\nabla_a\nabla_c r\right|
\\
&\lesssim&
\frac{1}{r_{dS}}\left[\frac{r+r_{r_{dS}}}{{r_2}^2}+\frac{r}{{r_2}^2}\right]
\\
&\lesssim& r_2^{-2}\;.
\end{eqnarray*}
%
%
%

The previous estimates and a similar reasoning also allow us to conclude that 
\begin{equation}
 |R^A_{BCD}-^{dS}\!\!R^A_{BCD}|\lesssim r_2^{-2}.
\end{equation}
%
%


\section{Proofs of Theorems~\ref{thmMain} and~\ref{thmMainGlobal} }
\label{sectionProofs}

We can now easily prove Theorem~\ref{thmMain}:

\begin{proof}
We can use condition \emph{(iii)} and the existing gauge freedom $u\mapsto f(u)$ to impose the condition $\uk_0:=-\frac{1}{4}\Omega^2(\partial_vr)^{-1}|_{v=0}\equiv-1$. Then there exist  $U,V>0$ such that the conditions of \IVP of Section~\ref{sectionIVP} are satisfied on $\left[U,+\infty\right[ \times \left\{0\right\} \cup  \left\{0\right\} \times [0,V]$ (in Section~\ref{sectionIVP} we translate $u$ so that $U= 0$): the regularity conditions are a consequence of the smoothness of the Cauchy initial data; the sign conditions~\eqref{signconditions} follow, using continuity, from the hypotheses and Proposition~\ref{area_Hawking};  the constraints from the fact that $({\cal M},g,F,\phi)$ is a solution of the Einstein-Maxwell-scalar field system; and the subextremality conditions from the fact that the data is converging to the reference subextremal RNdS.

Then the  results \emph{1-4} are just a compilation of the results in Theorem~\ref{blowup}, Corollary~\ref{corGrad}, Proposition~\ref{spacelike} and Section~\ref{sectionNoHair}.
\end{proof}

We end with the proof of Theorem~\ref{thmMainGlobal}

\begin{proof}
Conditions \emph{i'-iv'} together with assumption {\em I} were tailored to ensure that \IVP holds in $\left[0,+\infty\right[ \times \left\{0\right\} \cup  \left\{0\right\} \times [0,V]$, for all $V\geq 0$, and that the dual initial value problem, obtained by interchanging $u$ and $v$, holds in
$\left[0,U\right] \times \left\{0\right\} \cup  \left\{0\right\} \times [0,\infty)$, for all $U>0$. Then, as in the proof of Theorem~\ref{thmMain}, the results are valid in
${\cal Q}={\cal Q}_1\cup {\cal Q}_2$, where ${\cal Q}_1=\tilde {\cal Q}\cap [0,\infty)\times[0,V)$ and ${\cal Q}_2=\tilde {\cal Q}\cap [0,U)\times[0,\infty)$. Choosing $U$ and $V$ sufficiently large we can ensure that ${\cal Q}=\tilde {\cal Q}\cap [0,\infty)^2$, as desired.

Conditions~\emph{i'-iv'} allow us to invoke conclusion {\em 3} of Theorem~\ref{thmMain} to conclude that, under assumption {\em II}, we have that ${\cal N}_1$ is in fact empty; by the interchange symmetry between $u$ and $v$, we also conclude that ${\cal N}_2=\varnothing$. Then, the future boundary of ${\cal Q}=\tilde {\cal Q}\cap [0,\infty)^2$  is only composed of  ${\cal B}_{\infty}$, so that conclusion {\em 1} holds throughout, and the remaining conclusions are valid near the points $(\infty,v_{\infty})$ and $(u_{\infty},\infty)$. Now, Raychaudhuri's equations imply that in
$D^{-}({\cal B}_{\infty})={\cal Q}$  we have $\lambda>0$, $\nu>0$ and, consequently $1-\mu<0$. From the previous signs we see that the conclusions of Lemma~\ref{bounds} hold and hence~\eqref{frac1minusmu} is valid in ${\cal Q}\cap \{r\geq r_1\}$, for a sufficiently large $r_1$. To see that the conclusions of Section~\ref{sectionScalarAss} then hold in $\cal Q$ we just need to note that the function $f$ defined in that section remains well-defined in the new (``larger'')  $\cal Q$; but this follows from compactness, after noting that the curve of constant $r$ where one takes the supremum is the union of a subset near $(\infty,v_{\infty})$, a subset near $(u_{\infty},\infty)$, and a compact subset connecting these two. This shows that {\em 2} holds in $\cal Q$. The remaining conclusions now follow, since they are rooted in {\em 2} and in the fact that all relevant quantities have appropriate bounds along $r=r_1$, which follows by a compactness argument as before.
 \end{proof}

\section*{Acknowledgements}

We thank Anne Franzen for sharing and allowing us to use Figure~\ref{figPenrose}.
This work was partially supported by FCT/Portugal through UID/MAT/04459/2013 and grant (GPSEinstein) PTDC/MAT-ANA/1275/2014.
Pedro Oliveira was supported by FCT/Portugal through the LisMath scholarship PD/BD/52640/2014.

%

\end{document}